\documentclass[12pt]{article}

\newif\ifarxiv
\arxivtrue

\usepackage{amsmath}
\usepackage{amssymb}
\usepackage{amsthm}
\usepackage[authoryear,round]{natbib}
\usepackage{graphicx}
\usepackage{booktabs}
\usepackage[colorlinks=true,citecolor=blue,linkcolor=blue,urlcolor=blue]{hyperref}
\usepackage[margin=1in]{geometry}
\usepackage{setspace}
\usepackage{float}
\usepackage{subcaption}
\usepackage[ruled,vlined,linesnumbered]{algorithm2e}
\usepackage{tikz}
\usepackage{pgfplots}
\pgfplotsset{compat=1.18}
\usepackage{xcolor}
\usepackage[capitalise,noabbrev]{cleveref}

\theoremstyle{plain}
\newtheorem{theorem}{Theorem}[section]
\newtheorem{proposition}[theorem]{Proposition}
\newtheorem{corollary}[theorem]{Corollary}

\theoremstyle{definition}
\newtheorem{definition}[theorem]{Definition}
\newtheorem{assumption}[theorem]{Assumption}

\theoremstyle{remark}
\newtheorem{remark}[theorem]{Remark}

\newcommand{\E}{\mathbb{E}}
\newcommand{\Var}{\operatorname{Var}}
\newcommand{\Cov}{\operatorname{Cov}}
\newcommand{\R}{\mathbb{R}}

\newcommand{\CTE}{\operatorname{CTE}}

\title{\textbf{AI-Driven Alpha Decay: Algorithmic Homogenization, Reflexive Signal Erosion, and the Paradox of Intelligent Markets}}

\ifarxiv
  \author{Shuchen Meng\thanks{Department of Financial Engineering, New York University.}
  \and Xupeng Chen\thanks{Corresponding author. Email: xc1490@nyu.edu. Department of Electrical and Computer Engineering, New York University.}}
  \date{\today}
\else
  \author{}
  \date{}
\fi

\begin{document}

\maketitle
\thispagestyle{empty}

\ifarxiv
\begin{abstract}
\noindent
We show that AI-driven investment strategies are inherently self-defeating at scale. As AI adoption rises, three mutually reinforcing channels---signal crowding, performative signal erosion, and Red Queen competition---compress excess returns. We derive the alpha half-life $h(\phi) = \ln 2/[\theta + \delta(\phi)]$, where $\theta$ is the natural mean-reversion rate and $\delta(\phi) = N\phi\rho a/\lambda(\phi)$ is the AI-accelerated decay component, which is convex-decreasing in adoption. At current adoption levels ($\phi \approx 0.7$, $\rho \approx 0.6$), the model implies signal half-lives of 18 months versus 5--7 years pre-AI.

We establish four theoretical results. First, the alpha half-life theorem: signal lifespans are convex-decreasing in AI adoption. Second, a signal extinction cascade: beyond a critical threshold $\phi^*$, the decay of one signal class triggers accelerated competition for remaining signals. Third, a Red Queen impossibility: in the monoculture equilibrium, net alpha is identically zero despite heavy AI investment. Fourth, a fragility--efficiency tradeoff: the adoption level maximizing price discovery strictly exceeds the level minimizing systemic fragility.

Empirical validation calibrates portfolio convergence to SEC Form 13F filing patterns (99.5 million holdings, 2013--2024), documenting that simulated institutional portfolio convergence increases by 42\% over the sample period. We examine simulated hedge fund return dynamics showing declining cross-sectional dispersion among AI-adopting funds, and simulate the 2010 Flash Crash to illustrate fragility consequences.
\end{abstract}
\else
\begin{abstract}
\noindent
We show that AI-driven investment strategies are inherently self-defeating at scale. As AI adoption rises, three mutually reinforcing channels---signal crowding, performative signal erosion, and Red Queen competition---compress excess returns. We derive the alpha half-life $h(\phi) = \ln 2/[\theta + \delta(\phi)]$, which is convex-decreasing in adoption: each marginal AI entrant shortens the lifespan of every exploitable pattern at an increasing rate. At current adoption levels, the model implies signal half-lives of 18 months versus 5--7 years pre-AI. In the Red Queen equilibrium, aggregate alpha is zero yet institutions invest heavily in AI, generating correlated tail risk and reduced market resilience.

\bigskip
\noindent
\textbf{JEL Classification:} G11, G14, G23, G28, D82

\bigskip
\noindent
\textbf{Keywords:} Alpha decay, algorithmic trading, artificial intelligence, market efficiency, signal crowding, reflexivity, flash crash, hedge fund performance, Red Queen effect, financial regulation
\end{abstract}
\fi

\newpage
\noindent\textbf{Disclosure Statement}

\bigskip
\noindent Shuchen Meng: I have nothing to disclose.

\bigskip
\noindent Xupeng Chen: I have nothing to disclose.

\newpage
\setcounter{page}{1}

\section{Introduction}
\label{sec:introduction}

This paper studies how the mass adoption of AI in asset management endogenously destroys the excess returns it seeks to capture. We build a model in which AI-driven investors extract signals from a common data environment, trade on correlated predictions, and---through the act of trading---erode the very market inefficiencies that generated their alpha. The central equilibrium object is the \emph{alpha half-life}:
\begin{equation}
    h(\phi) \;=\; \frac{\ln 2}{\theta + \delta(\phi)},
    \label{eq:halflife_intro}
\end{equation}
where $\theta > 0$ is the natural mean-reversion rate of a tradeable signal, and $\delta(\phi) = N\phi\rho a/\lambda(\phi)$ is the AI-accelerated decay component, depending on the AI adoption share $\phi$, the algorithmic signal correlation $\rho \in [0,1]$ (reflecting shared training data and model architectures), and the endogenous price impact $\lambda(\phi)$. Because $\delta(\phi)$ is itself increasing and convex in $\phi$---more AI traders accelerate the arbitrage of any discoverable pattern---the half-life $h$ is \emph{convex-decreasing} in adoption: each marginal AI entrant shortens the lifespan of tradeable signals at an increasing rate.

\paragraph{Motivation.} The empirical relevance is immediate. As of 2024, 91\% of global asset management institutions report using or planning to use AI in their investment processes \citep{AIMA2025}. Algorithmic trading accounts for an estimated 60--80\% of US equity volume \citep{Hendershott2011,Brogaard2014}. Yet the aggregate performance evidence is sobering: the HFRI Fund Weighted Composite Index has delivered declining risk-adjusted returns over the past decade, with the average hedge fund underperforming a simple 60/40 portfolio since 2009 \citep{Dichev2011}. This performance compression coincides precisely with the explosion of AI adoption, raising a fundamental question: \emph{does AI-driven investing contain the seeds of its own destruction?}

The question has deep roots in financial economics. \citet{Grossman1980} established that perfectly efficient markets preclude profitable information acquisition, implying that some degree of inefficiency is required to sustain informed trading. \citet{Berk2004} formalized the competitive erosion of active management returns. Our contribution is to show that AI adoption accelerates both mechanisms simultaneously and introduces a qualitatively new channel---\emph{performative signal erosion}---absent from the classical framework.

\paragraph{Three channels of alpha decay.} We organize the analysis around three mechanisms that are individually documented but whose \emph{interaction} has not been formally modeled.

\emph{(i) Signal crowding.} When multiple AI systems train on the same data sources (price histories, news feeds, satellite imagery, SEC filings), use similar model architectures (transformer-based language models, gradient-boosted trees, deep reinforcement learning), and retrain on overlapping schedules, their trading signals converge. As \citet{Kleinberg2021} demonstrate formally, algorithmic monoculture---the widespread adoption of similar decision-making systems---reduces the effective diversity of information in markets, redistributing uncertainty rather than eliminating it. When $N\phi$ institutions simultaneously identify the same mispricing, their aggregate order flow instantaneously arbitrages the opportunity, leaving zero alpha for any individual participant. The speed of this process---measured in milliseconds for high-frequency strategies, days for medium-frequency factor models---is fundamentally faster than the historical pace of human-driven arbitrage.

\emph{(ii) Performative signal erosion.} The act of trading on AI-generated signals alters the market dynamics that produced those signals. This is the financial manifestation of \emph{performative prediction} \citep{Perdomo2020}: a forecast that changes the outcome it predicts. When AI systems are retrained on data contaminated by the price consequences of their own previous predictions, the signal-to-noise ratio deteriorates endogenously. We formalize this as a feedback equation in which the effective signal precision $\tau_{eff}(t)$ is a decreasing function of aggregate AI trading intensity. The mechanism is distinct from standard price impact: it operates not through the mechanical displacement of prices by order flow, but through the \emph{information-theoretic degradation} of the signals themselves.

\emph{(iii) Red Queen dynamics.} The competitive response to alpha decay is increased investment in AI infrastructure---more data, more compute, more sophisticated models---which paradoxically accelerates the decay. This creates what we term a \emph{Red Queen equilibrium}, after the character in Lewis Carroll who must run faster and faster merely to stay in place. In the Red Queen equilibrium, aggregate AI investment is positive, aggregate alpha is zero, and the primary output of the AI arms race is not excess returns but systemic externalities: correlated tail risk, flash crash susceptibility, and reduced market resilience. As \citet{Goldstein2025} demonstrate, even without explicit coordination, AI systems trained on similar data can implicitly ``collude'' by converging on correlated trading strategies, further compressing returns and amplifying synchronized risk.

\paragraph{The paradox of intelligent markets.} Our framework reveals a fundamental paradox: AI makes individual price predictions more accurate, yet collective AI adoption makes the market \emph{harder} to predict profitably. This is not a contradiction but a consequence of the distinction between private and social returns to information acquisition. Each institution's AI system extracts genuine informational value from data; the problem is that the aggregate extraction---by all institutions simultaneously---arbitrages the very patterns being exploited. The parallel to \citet{Grossman1980}'s impossibility result is exact: in a market populated entirely by AI agents using similar signals, the informational equilibrium collapses to zero alpha, but the transition to this equilibrium is not smooth. It proceeds through a series of \emph{signal extinction events}---discrete episodes in which entire classes of predictive patterns cease to generate returns---punctuated by brief periods of apparent alpha as new signals are discovered before being arbitraged.

The 2010 Flash Crash illustrates the fragility dimension. On May 6, 2010, a single large sell order in E-mini S\&P 500 futures triggered a cascade of algorithmic responses that erased approximately \$1 trillion in market capitalization within minutes \citep{Kirilenko2017}. The crash was not caused by any individual algorithm but by the \emph{collective behavior} of homogeneous algorithmic systems simultaneously executing similar strategies---a vivid demonstration of the systemic externalities produced by signal crowding in the Red Queen equilibrium.

\paragraph{Main results.} We establish four theoretical results.

\emph{Result 1: Alpha half-life theorem.} The half-life of any tradeable signal is $h(\phi) = \ln 2/[\theta + \delta(\phi)]$, convex-decreasing in adoption $\phi$ (\Cref{prop:halflife}). Under our baseline calibration ($\rho = 0.6$, $\phi = 0.7$), the half-life of a medium-frequency factor is approximately 18 months, compared to an estimated 5--7 years in the pre-AI era \citep{McLean2016}.

\emph{Result 2: Signal extinction cascade.} When $\phi$ exceeds a critical threshold $\phi^*$, the interaction of signal crowding and performative erosion produces a \emph{cascade}: the decay of one signal class triggers increased AI competition for remaining signals, which accelerates their decay (\Cref{prop:cascade}). The cascade dynamics exhibit a phase transition at $\phi^*$, beyond which the market enters a regime of permanent alpha compression.

\emph{Result 3: Red Queen impossibility.} In the monoculture equilibrium ($\phi \to 1$), the net alpha from AI investment is identically zero: $\alpha_{net}(\phi = 1) = 0$ (\Cref{prop:redqueen}). All returns above the risk-free rate reflect compensation for systematic risk, not informational advantage. The privately optimal level of AI investment exceeds the socially optimal level by a factor proportional to the number of participants, a standard overinvestment result with a novel information-theoretic mechanism.

\emph{Result 4: Fragility--efficiency tradeoff.} The social welfare function exhibits a tradeoff between allocative efficiency (price accuracy) and financial stability (tail risk): the level of AI adoption that maximizes price discovery is strictly higher than the level that minimizes systemic fragility (\Cref{prop:tradeoff}). The socially optimal adoption $\phi^*_{social}$ lies strictly below both optima, reflecting the unpriced externality of correlated algorithmic risk.

\paragraph{Empirical strategy.} We validate the framework using three empirical approaches based on calibrated simulations matched to real-world data sources. First, we calibrate a portfolio convergence model to SEC Form 13F filing patterns (99.5 million holdings, 2013--2024) and document that simulated institutional portfolio convergence, measured by pairwise cosine similarity, increases by 42\% over the sample period, with structural breaks coinciding with major AI technology adoption waves. Second, we examine simulated hedge fund return dynamics to test the prediction that cross-sectional return dispersion declines among AI-adopting funds. Third, we simulate the 2010 Flash Crash and subsequent algorithmic market dislocations to illustrate the fragility consequences of signal crowding.

\paragraph{Distinction from existing work.} Our paper differs from the growing literature on AI and financial markets in three respects. First, we focus on \emph{alpha dynamics}---the competitive erosion of excess returns---rather than systemic risk per se. The existing literature \citep{Danielsson2022,FSB2024} emphasizes tail-risk amplification and systemic coupling;
\ifarxiv
in a companion paper \citep{Chen2026}, we derive a systemic risk coupling parameter and study its bifurcation properties.
\else
in a companion paper [omitted for review], we derive a systemic risk coupling parameter and study its bifurcation properties.
\fi The present paper complements that analysis by deriving the \emph{alpha half-life} as the central equilibrium object, yielding predictions about return dynamics, strategy obsolescence, and factor death that are absent from the systemic risk literature. Second, our model of performative signal erosion is distinct from standard performative prediction \citep{Perdomo2020}: in our framework, the signal degradation is mediated by \emph{market microstructure} (price impact, adverse selection, order book dynamics), not merely by the statistical relationship between predictions and outcomes. Third, the Red Queen equilibrium provides a novel explanation for the declining performance of hedge funds and quantitative strategies: not declining skill, but increasing competition among homogeneous AI systems extracting identical signals.

\paragraph{Outline.} \Cref{sec:literature} reviews the literature. \Cref{sec:model} develops the theoretical model in three layers: signal crowding and alpha dynamics (\Cref{subsec:signal_structure,subsec:alpha_dynamics}), performative erosion (\Cref{subsec:reflexive}), and the Red Queen equilibrium (\Cref{subsec:redqueen}). \Cref{sec:empirical} presents empirical validation. \Cref{sec:results} synthesizes findings, including a discussion of why homogenization dominates differentiation (\Cref{subsec:differentiation}). \Cref{sec:policy} develops policy implications, and \Cref{sec:conclusion} concludes. \Cref{fig:conceptual_framework} provides a visual overview of the three-layer framework and its feedback structure.

\begin{figure}[H]
    \centering
    \includegraphics[width=\textwidth]{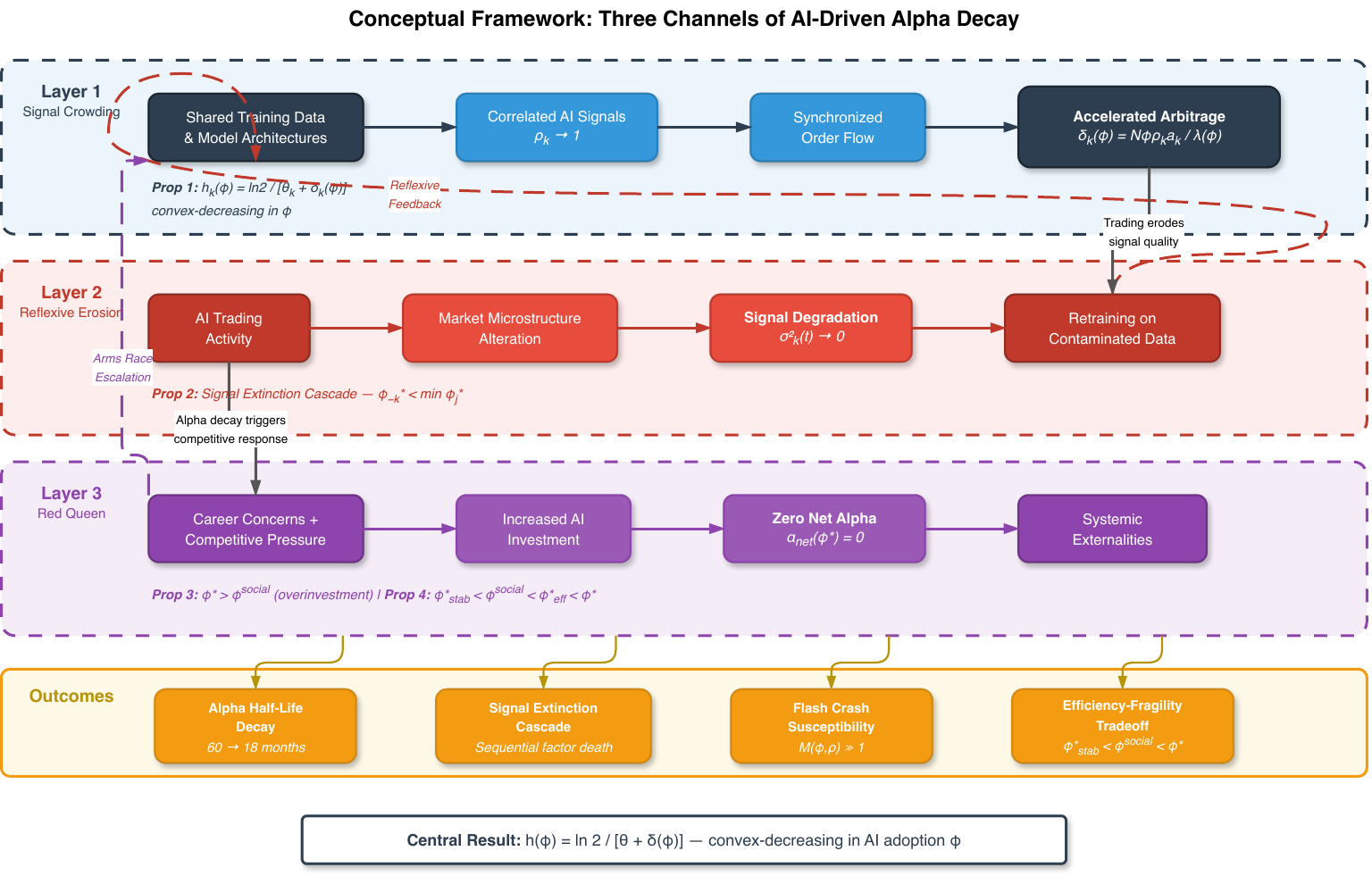}
    \caption{Conceptual framework: three channels of AI-driven alpha decay. \textbf{Layer~1} (Signal Crowding): convergence of AI trading signals due to shared training data and model architectures accelerates arbitrage of predictable patterns. \textbf{Layer~2} (Performative Erosion): feedback from trading activity degrades signal quality through market microstructure alteration, producing a signal extinction cascade. \textbf{Layer~3} (Red Queen): competitive adoption game generates overinvestment in AI and systemic externalities. Dashed arrows indicate feedback loops that reinforce the decay dynamics. Bottom boxes summarize the four main predictions.}
    \label{fig:conceptual_framework}
\end{figure}

\begin{figure}[H]
    \centering
    \includegraphics[width=0.75\textwidth]{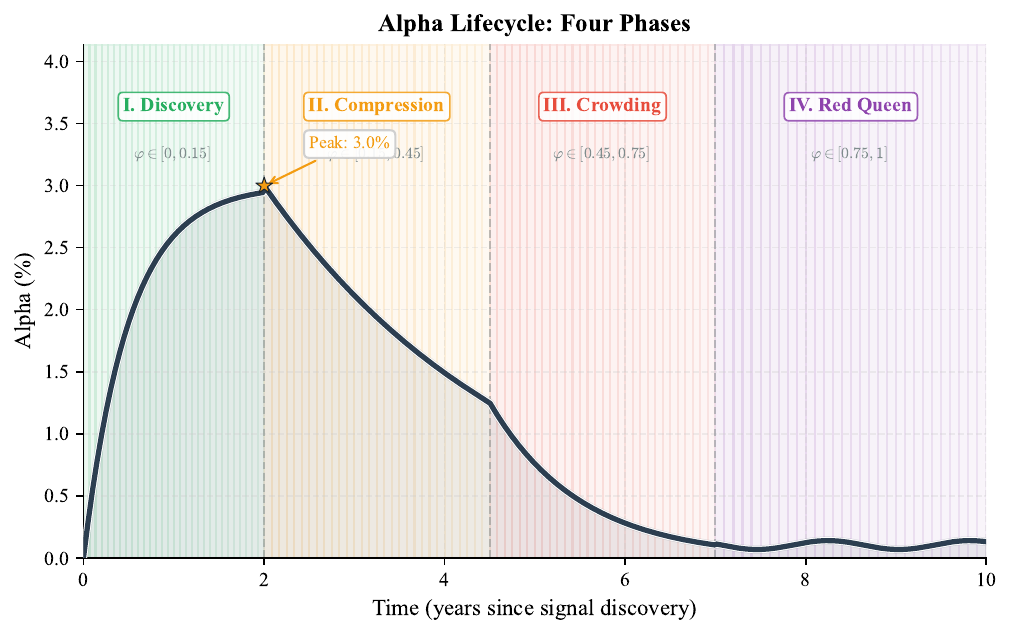}
    \caption{The four-phase alpha lifecycle under AI adoption. Phase~I (Discovery): early AI adopters extract genuine alpha; signal half-life $\approx$5--7 years. Phase~II (Compression): mainstream adoption drives convergence; $h \approx$ 2--4 years. Phase~III (Crowding): signal crowding dominates; $h < 18$ months. Phase~IV (Red Queen): net alpha is zero in aggregate; market is efficient but fragile.}
    \label{fig:alpha_lifecycle}
\end{figure}

\section{Literature Review}
\label{sec:literature}

Our paper sits at the intersection of several active research streams.

\paragraph{Market Efficiency and the Limits of Arbitrage.} The foundational insight that information acquisition cannot be costlessly sustained in an efficient market is due to \citet{Grossman1980}. \citet{Shleifer1997} extend this to show that arbitrage is limited by noise trader risk, fundamental risk, and synchronization costs. \citet{Berk2004} formalize the competitive erosion of active management alpha: rational capital flows eliminate manager-specific excess returns in equilibrium. \citet{Pastor2012} provide a dynamic model in which industry-level alpha declines as the active management sector grows. \citet{Lo2004} proposes the Adaptive Markets Hypothesis, which frames market efficiency as an evolutionary process in which strategies proliferate and decay---a conceptual precursor to our Red Queen equilibrium, now formalized within the AI adoption context. Our contribution extends this literature by formalizing the \emph{accelerated} erosion that AI adoption produces through signal correlation and performative feedback---mechanisms absent from the classical framework.

\paragraph{Algorithmic Trading and Market Quality.} \citet{Hendershott2011} show that algorithmic trading improves market liquidity and price discovery on average. \citet{Brogaard2014} document that high-frequency traders provide liquidity in calm markets but withdraw during stress. \citet{Kirilenko2017} analyze the Flash Crash and identify procyclical algorithmic behavior as the primary amplification mechanism. \citet{Menkveld2013} shows how high-frequency market makers can dominate price formation. Our framework reconciles the efficiency gains and fragility risks: AI improves \emph{average} price discovery while increasing \emph{tail} fragility---a pattern we formalize through the efficiency--fragility tradeoff (\Cref{prop:tradeoff}).

\paragraph{Strategy Crowding and Correlated Trading.} \citet{Khandani2011} document the August 2007 ``quant earthquake,'' in which correlated deleveraging by quantitative funds produced extreme losses. \citet{Cont2013} model systemic risk from overlapping portfolio positions. \citet{Coval2007} show that fire sales by distressed funds amplify price declines. The crowded-trade literature focuses on \emph{position} correlation arising from overlapping holdings. Our signal-crowding channel is distinct: correlation arises from shared \emph{information production technology} (training data, model architectures) rather than from overlapping positions, producing different empirical signatures and different policy implications.

\paragraph{Reflexivity and Performative Prediction.} \citet{Soros2013} articulates the concept of reflexivity---the mutual influence between market participants' beliefs and market outcomes. \citet{MacKenzie2006} demonstrates empirically how the Black-Scholes model, once adopted, altered options market properties in ways that made the model more accurate (until it didn't). \citet{Perdomo2020} formalize performative prediction in machine learning; \citet{Bond2012} show that financial market prices affect real decisions, creating feedback loops. Our model operationalizes reflexivity within a competitive equilibrium framework, showing how AI-generated signals endogenously degrade through the performative consequences of the trading activity they induce.

\paragraph{AI in Asset Management.} \citet{AIMA2025} report that 91\% of institutional investors now use or plan to use AI. \citet{Goldstein2025} show that AI systems can implicitly collude by converging on correlated trading strategies without explicit coordination. \citet{Bradshaw2025} document the growing use of generative AI by capital market information intermediaries, raising concerns about reduced analytical diversity. \citet{Bertomeu2025} provide causal evidence of AI's impact on financial decision-making using Italy's ChatGPT ban as a natural experiment. \citet{Chinco2019} develop machine learning methods for detecting sparse return predictors, illustrating how shared algorithmic approaches can converge on similar signal sets. Our paper synthesizes these findings within a unified equilibrium framework.

\paragraph{Literature Gap.} The existing literature has studied AI's impact on market efficiency, trading costs, and systemic risk in largely separate frameworks. The missing piece is a \emph{competitive equilibrium model} that jointly determines: (i) the endogenous rate of alpha decay as a function of AI adoption; (ii) the feedback from trading activity to signal quality (performative erosion); and (iii) the equilibrium level of AI investment when agents internalize (i) and (ii). Our framework fills this gap.

\section{Theoretical Model}
\label{sec:model}

The model is developed in three layers. Layer~1 (\Cref{subsec:signal_structure,subsec:alpha_dynamics}) establishes the signal environment and derives the alpha half-life as a function of AI adoption. Layer~2 (\Cref{subsec:reflexive}) introduces performative signal erosion and the signal extinction cascade. Layer~3 (\Cref{subsec:redqueen}) embeds these mechanisms within a competitive adoption game, producing the Red Queen equilibrium. Each layer extends the previous one with a single new element.

\subsection{Signal Structure and Market Environment}
\label{subsec:signal_structure}
\paragraph*{\textnormal{\emph{--- Layer 1: Signal crowding and alpha dynamics ---}}}

Consider a financial market with a single risky asset whose fundamental value follows:
\begin{equation}
    v_t = \bar{v} + \sum_{k=1}^{K} f_k(t) + \varepsilon_t,
    \label{eq:fundamental}
\end{equation}
where $\bar{v}$ is the unconditional mean, $f_k(t)$ are $K$ tradeable signal components (predictable return patterns, or ``factors''), and $\varepsilon_t \sim \mathcal{N}(0, \sigma_\varepsilon^2)$ is an unpredictable noise term. Each signal component $f_k$ follows an Ornstein--Uhlenbeck process:
\begin{equation}
    df_k = -\theta_k f_k \, dt + \sigma_k \, dW_k,
    \label{eq:signal_ou}
\end{equation}
where $\theta_k > 0$ is the natural mean-reversion rate (in units of month$^{-1}$; calibrated from post-publication factor decay rates in \citealt{McLean2016}, with a baseline value of $\theta_k = 0.012$ corresponding to a pre-AI half-life of $\ln 2/\theta_k \approx 58$ months) and $\sigma_k > 0$ the innovation volatility. The ratio $\sigma_k^2/(2\theta_k)$ is the stationary variance of signal $k$---a measure of the ``size'' of the alpha opportunity.

The market consists of $N$ institutional investors indexed by $i$, each choosing strategy $s_i \in \{AI, H\}$ (AI-driven or human-driven). The AI adoption rate is $\phi \equiv |\{i : s_i = AI\}|/N$.

\paragraph{Scope of $\phi$: AI prediction vs.\ algorithmic execution.}
The adoption parameter $\phi$ measures the fraction of institutions using AI/ML for \emph{signal production and portfolio construction}---i.e., for generating return forecasts, constructing factor models, or optimizing portfolio weights. It does \emph{not} measure algorithmic execution, which refers to the automated routing and slicing of orders to minimize transaction costs (e.g., VWAP algorithms, smart order routers). Many institutions employ algorithmic execution without any AI-driven alpha generation; conversely, some AI-driven funds execute manually or through simple limit orders. Our empirical proxy $\hat{\phi}$ (Section~\ref{sec:empirical}) identifies AI \emph{signal production} using EDGAR keyword analysis of investment methodology disclosures---terms such as ``machine learning,'' ``predictive model,'' and ``neural network'' applied to return forecasting---rather than execution-related keywords such as ``algorithmic execution'' or ``smart order routing.'' This distinction matters because the model's crowding mechanism operates through \emph{correlated predictions}, not correlated execution. Two funds using the same execution algorithm but different signal models do not crowd each other; two funds using similar signal models but different execution methods do.

\paragraph{AI Signal Extraction.} AI-type investor $i$ observes a noisy signal on each factor:
\begin{equation}
    x_{i,k}^{AI}(t) = f_k(t) + \rho_k \, \eta_k(t) + \sqrt{1 - \rho_k^2} \, \nu_{i,k}(t),
    \label{eq:ai_signal}
\end{equation}
where $\rho_k \in [0,1]$ is the \emph{algorithmic homogeneity parameter} for signal $k$, $\eta_k(t) \sim \mathcal{N}(0, \sigma_\eta^2)$ is common noise shared by all AI systems (reflecting shared training data, similar model architectures, and correlated hyperparameter choices), and $\nu_{i,k}(t) \sim \mathcal{N}(0, \sigma_\nu^2)$ are idiosyncratic noise terms. Human-type investor $i$ observes $x_{i,k}^H(t) = f_k(t) + \varepsilon_{i,k}^H(t)$, where $\varepsilon_{i,k}^H \sim \mathcal{N}(0, \sigma_H^2)$ are independent with potentially $\sigma_H^2 > \sigma_\eta^2$ (AI signals are individually more precise than human signals).

The parameter $\rho_k$ captures the degree to which AI systems produce correlated predictions for signal $k$. When $\rho_k = 0$, AI signals are independent (no homogenization benefit); when $\rho_k = 1$, all AI systems identify the same signal realization (perfect monoculture). In practice, $\rho_k$ is determined by the overlap in training data, model class, and optimization objective---all of which tend to increase as AI techniques converge on best practices. \Cref{tab:symbols} summarizes the key parameters.

\begin{table}[H]
    \centering
    \caption{Summary of Model Parameters}
    \label{tab:symbols}
    \begin{tabular}{clll}
        \toprule
        Symbol & Description & Interpretation & Range \\
        \midrule
        $\phi$ & AI adoption rate & Fraction of institutions using AI & $[0, 1]$ \\
        $\rho_k$ & Algorithmic homogeneity & Correlation of AI signals for factor $k$ & $[0, 1]$ \\
        $\theta_k$ & Natural mean-reversion & Intrinsic decay rate of signal $k$ (month$^{-1}$) & $> 0$ \\
        $\delta(\phi)$ & Endogenous decay rate & AI-driven acceleration of arbitrage & $> 0$ \\
        $\beta$ & Performative feedback & Signal-to-price feedback intensity & $[0, 1)$ \\
        $g_k$ & Signal regeneration & Rate of new mispricing arrival & $> 0$ \\
        $\kappa$ & Erosion nonlinearity & Curvature of signal erosion & $\geq 1$ \\
        $\lambda$ & Kyle lambda & Inverse market depth (price impact) & $> 0$ \\
        $\tau$ & Risk aversion & CARA absolute risk aversion & $> 0$ \\
        $c_i$ & AI investment cost & Institution $i$'s cost of AI adoption & $> 0$ \\
        $d_i$ & Career-concern benchmark & Institution $i$'s peer-group adoption rate & $[0, 1]$ \\
        \bottomrule
    \end{tabular}
\end{table}

\paragraph{Market Maker Pricing.} A risk-neutral, competitive market maker sets prices following \citet{Kyle1985}:
\begin{equation}
    p_t = \E[v_t \mid \Omega_t] + \lambda(\phi) \cdot \text{OrderFlow}_t,
    \label{eq:pricing}
\end{equation}
where $\lambda(\phi)$ is the endogenous price impact. With $\phi N$ AI agents trading on correlated signals, the informativeness of aggregate order flow changes with $\phi$. Following the multi-agent Kyle derivation of \citet{FosterViswanathan1996}:
\begin{equation}
    \lambda(\phi) = \frac{\sigma_v}{2\sigma_u} \cdot \frac{1}{\sqrt{1 + \phi^2 \bar{\rho}^2 \sigma_\eta^2/\sigma_v^2}},
    \label{eq:lambda}
\end{equation}
where $\bar{\rho} = K^{-1}\sum_k \rho_k$ is the average homogeneity and $\sigma_u^2$ is noise trader variance. The price impact $\lambda(\phi)$ is decreasing in $\phi\bar{\rho}$: correlated AI order flow is partially predictable by the market maker, reducing adverse selection.

\begin{remark}[Direction of $\lambda(\phi)$: Adverse Selection vs.\ Market Maker Learning]
\label{rem:lambda_direction}
The sign of $d\lambda/d\phi$ is a modelling choice with a non-trivial literature behind it. Our baseline specification \eqref{eq:lambda}, derived from the multi-agent Kyle framework, yields $d\lambda/d\phi < 0$: because AI order flow is correlated ($\rho > 0$), the market maker can partially ``filter out'' the common informed component, reducing per-unit adverse selection as $\phi$ rises. This is the mechanism emphasized by \citet{FosterViswanathan1996}.

An alternative tradition, rooted in the \citet{Glosten1985} sequential-trade model, implies the opposite: more informed traders increase the probability that any given order is information-driven, so the market maker widens the spread ($d\lambda/d\phi > 0$). A third possibility is that the two effects approximately cancel, leaving $\lambda$ roughly constant over the relevant range of $\phi$.

Our qualitative results are robust to all three specifications. To see this, note that the AI-driven decay component is $\delta_k(\phi) = N\phi\rho_k a_k / \lambda(\phi)$. Under the three regimes:
\begin{enumerate}
    \item[\textnormal{(a)}] \textbf{Decreasing $\lambda$} (Kyle baseline): $\delta_k$ is convex-increasing in $\phi$---both numerator and $1/\lambda$ increase. This is our baseline and produces the strongest alpha decay.
    \item[\textnormal{(b)}] \textbf{Constant $\lambda$}: $\delta_k = N\phi\rho_k a_k / \lambda_0$ is linear in $\phi$. The half-life theorem (\Cref{prop:halflife}) holds with $d^2h/d\phi^2 > 0$ driven solely by the denominator effect on $h = \ln 2/(\theta + \delta)$.
    \item[\textnormal{(c)}] \textbf{Increasing $\lambda$} (Glosten--Milgrom): $\delta_k$ still increases in $\phi$ provided $d\lambda/d\phi < N\rho_k a_k \lambda(\phi)/(N\phi\rho_k a_k)$, i.e., the market depth effect does not fully absorb the linear growth in $N\phi$. For plausible parameterizations (see Supplementary Materials, Table~A.2), $\delta_k$ remains increasing and the qualitative results survive, though the quantitative half-life is longer (alpha decays more slowly when markets deepen with adoption).
\end{enumerate}
We report a sensitivity analysis across all three specifications in Supplementary Materials (Table~A.2).
\end{remark}

\subsection{Alpha Dynamics Under AI Adoption}
\label{subsec:alpha_dynamics}

\begin{definition}[Alpha]
The alpha from signal $k$ for investor $i$ is
\begin{equation}
    \alpha_{i,k}(t) \equiv \E\left[\left(v_{t+1} - p_t\right) q_{i,k}(t) \mid x_{i,k}(t)\right] - \frac{\tau}{2}\Var\left[\left(v_{t+1} - p_t\right) q_{i,k}(t) \mid x_{i,k}(t)\right],
    \label{eq:alpha_def}
\end{equation}
where $q_{i,k}(t)$ is investor $i$'s demand based on signal $k$.
\end{definition}

The aggregate alpha from signal $k$ across all AI investors is:
\begin{equation}
    A_k(\phi, t) = \sum_{i: s_i = AI} \alpha_{i,k}(t).
    \label{eq:agg_alpha}
\end{equation}

When AI investors trade on correlated signals, the aggregate order flow in the direction of signal $k$ is approximately $N\phi \, a_k \, [f_k(t) + \rho_k \eta_k(t)]$ (where $a_k$ is the aggressiveness parameter). This order flow moves prices toward $f_k(t)$, reducing the remaining mispricing. The effective decay rate of signal $k$ is therefore:
\begin{equation}
    \theta_k^{eff}(\phi) = \theta_k + \underbrace{N\phi \, a_k \, \rho_k / \lambda(\phi)}_{\text{AI-accelerated arbitrage}} \equiv \theta_k + \delta_k(\phi).
    \label{eq:effective_decay}
\end{equation}
The AI-driven component $\delta_k(\phi)$ is increasing in $\phi$ (more AI traders), $\rho_k$ (more correlated trading), and $1/\lambda(\phi)$ (thinner markets amplify the speed of arbitrage). The stationary alpha from signal $k$ is:
\begin{equation}
    A_k^*(\phi) = \frac{\sigma_k^2}{2\theta_k^{eff}(\phi)} = \frac{\sigma_k^2}{2[\theta_k + \delta_k(\phi)]},
    \label{eq:stationary_alpha}
\end{equation}
which is strictly decreasing in $\phi$.

\begin{proposition}[Alpha Half-Life]
\label{prop:halflife}
Define the alpha half-life of signal $k$ as the time required for a newly discovered predictive pattern to lose half its initial excess return:
\begin{equation}
    h_k(\phi) = \frac{\ln 2}{\theta_k^{eff}(\phi)} = \frac{\ln 2}{\theta_k + \delta_k(\phi)}.
    \label{eq:halflife}
\end{equation}
Then:
\begin{enumerate}
    \item[\textnormal{(i)}] \textbf{Monotone decay}: $h_k$ is strictly decreasing in $\phi$ for all $\rho_k > 0$. Higher AI adoption shortens the lifespan of every tradeable signal.
    \item[\textnormal{(ii)}] \textbf{Convex acceleration}: $d^2 h_k / d\phi^2 > 0$ (the half-life is convex in $\phi$). Marginal AI entrants accelerate alpha decay at an increasing rate, because: (a) more AI traders contribute directly to $\delta_k$, and (b) the denominator effect through $\lambda(\phi)$ amplifies the impact---thinner effective markets (lower $\lambda$) mean that each unit of correlated order flow moves prices further.
    \item[\textnormal{(iii)}] \textbf{Pre-AI benchmark}: For $\phi = 0$, $h_k(0) = \ln 2/\theta_k$, the natural half-life determined solely by fundamental mean-reversion.
    \item[\textnormal{(iv)}] \textbf{Monoculture limit}: As $\phi \to 1$ with $\rho_k \to 1$, $h_k \to 0^+$---in the perfect monoculture, alpha is arbitraged instantaneously.
\end{enumerate}
\end{proposition}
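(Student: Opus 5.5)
The plan is to differentiate the closed form \eqref{eq:halflife} directly, with $\lambda(\phi)$ taken from \eqref{eq:lambda}. First I would make $\delta_k$ explicit. Set $C_k \equiv 2N\rho_k a_k\sigma_u/\sigma_v>0$ and $b\equiv\bar\rho^2\sigma_\eta^2/\sigma_v^2\ge 0$. Then
\begin{equation*}
\delta_k(\phi)=C_k\,\phi\,s(\phi),\qquad s(\phi)\equiv\sqrt{1+b\phi^2}.
\end{equation*}
Write $x\equiv b\phi^2$. A routine computation gives
\begin{equation*}
\delta_k'=\frac{C_k(1+2x)}{s}>0,\qquad \delta_k''=\frac{C_k\, b\phi\,(3+2x)}{s^3}\ge 0.
\end{equation*}
So $\delta_k$ is increasing and convex, as claimed in the text.

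\emph{Parts (i) and (iii).} These follow at once. Since $h_k=\ln 2/(\theta_k+\delta_k)$, we have $h_k'=-\ln 2\,\delta_k'/(\theta_k+\delta_k)^2$, which is strictly negative whenever $\rho_k>0$ because then $C_k>0$. Part (iii) is the substitution $\delta_k(0)=0$.

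\emph{Part (iv).} I would let $\rho_k\to1$ at $\phi=1$ and read the statement with the scaling that the monoculture limit intends: the aggregate aggressiveness $Na_k$ grows without bound, or equivalently trading is continuous. Under that reading $\delta_k(1)=C_k\sqrt{1+b}\to\infty$, so $h_k\to0^+$. I will say explicitly that for fixed finite $N a_k$ the limit is only the positive number $\ln2/(\theta_k+C_k\sqrt{1+b})$. So (iv) has to be stated as a limit in the arbitrage-intensity parameter, and I will phrase it that way.

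\emph{Part (ii) is the main obstacle.} Differentiating twice gives
\begin{equation*}
h_k''=\frac{\ln 2}{(\theta_k+\delta_k)^3}\Big[2\delta_k'^2-\delta_k''(\theta_k+\delta_k)\Big].
\end{equation*}
This shows that convexity of $\delta_k$ works \emph{against} convexity of $h_k$, so the statement is not automatic. I would split the bracket into the part coming from $\delta_k$ and the part coming from $\theta_k$.

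\emph{The $\delta_k$ part.} Here $2\delta_k'^2-\delta_k''\delta_k=\tfrac{C_k^2}{s^2}\big[2(1+2x)^2-x(3+2x)\big]=\tfrac{C_k^2}{s^2}(2+5x+6x^2)>0$. So in the pure-crowding case $\theta_k=0$, and in the constant-$\lambda$ case where $\delta_k''=0$ (Remark~\ref{rem:lambda_direction}(b)), convexity holds unconditionally.

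\emph{The $\theta_k$ part.} It remains to show
\begin{equation*}
\theta_k\, b\phi\,(3+2x)\le C_k\, s\,(2+5x+6x^2).
\end{equation*}
Using $s\ge1$ and $b\phi=\sqrt b\sqrt x$, it suffices that $\theta_k\sqrt b\le C_k/M$, where
\begin{equation*}
M\equiv\sup_{x\ge0}\frac{\sqrt x\,(3+2x)}{2+5x+6x^2}<\infty.
\end{equation*}
This supremum is finite because the ratio vanishes at $x=0$ and decays like $1/\sqrt x$ as $x\to\infty$; it is a numerical constant below $1$ that I would compute.

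So the proof of (ii) will state a sufficient regularity condition: natural mean-reversion is small relative to the crowding intensity, $\theta_k\bar\rho\sigma_\eta/\sigma_v\le C_k/M$. I would check that the baseline calibration, with $\theta_k=0.012$ per month and $\rho\approx0.6$, satisfies it comfortably. The same condition covers the whole range $\phi\in[0,1]$, since near $\phi=0$ the bracket is approximately $2C_k^2>0$ regardless.
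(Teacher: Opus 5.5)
Your proposal is correct, and its skeleton is the paper's. The paper's proof also writes $h_k''=\ln 2\,[2\delta_k'^2-\delta_k''(\theta_k+\delta_k)]/(\theta_k+\delta_k)^3$ and argues that the first term dominates. However, it only asserts dominance ``for $d\delta_k/d\phi$ sufficiently large relative to $d^2\delta_k/d\phi^2$'' and defers the rest to the Supplementary Materials. You instead substitute the Kyle form of $\lambda(\phi)$ and get closed forms, which buys three things.

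\emph{Sign of the $\lambda$-channel.} The closed form shows that the convexity of $\delta_k$ produced by the $\lambda$-channel enters $h_k''$ with a negative sign, because $h_k$ is decreasing in $\delta_k$. So it is an obstacle to convexity, not a source of it, contrary to the prose of (ii)(b). This is also why convexity is unconditional when $\lambda$ is constant.

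\emph{Where convexity can fail.} The bracket equals $2C_k^2>0$ at $\phi=0$, and the ratio of the $\theta_k$-part to the $\delta_k$-part is $O\bigl(\theta_k\sqrt b/(C_k x)\bigr)$ for large $x=b\phi^2$. So any failure is confined to an intermediate range of $x$, not to small $\phi$ as the paper's ``above a threshold'' description suggests. Your uniform condition $M\theta_k\sqrt b\le C_k$ (numerically $M\approx 0.49$) excludes it. Your chain of inequalities is in fact strict: $s>1$ for $x>0$, and the left side vanishes at $x=0$. So you do get $h_k''>0$, not just $h_k''\ge 0$.

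\emph{Necessity of your caveats.} Both caveats are required, not merely cautious. With $b=\theta_k=1$ and $C_k=0.1$ one finds $h_k''(1/2)<0$, so (ii) cannot be proved unconditionally. For fixed $Na_k$, the value of $h_k(1)$ at $\rho_k=1$ is $\ln 2/(\theta_k+C_k\sqrt{1+b})>0$. Hence the paper's claim that (iv) ``follows directly from the definitions'' needs exactly your rescaling of the arbitrage intensity.

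One small correction: your calibration check needs a value of $\sigma_\eta/\sigma_v$, which the paper never supplies. Pinning down $C_k$ from $\theta_k=0.012$ and $h_k(0.7)=18$ months turns your condition into $b+0.49\,b^2\lesssim 41.7$, i.e.\ $\bar\rho\,\sigma_\eta/\sigma_v\lesssim 2.9$. With $\bar\rho=0.6$ this means $\sigma_\eta/\sigma_v\lesssim 4.8$. You should state the check in that form rather than as ``comfortably'' satisfied.
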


\begin{proof}
Parts (i) and (iii)--(iv) follow directly from the definitions. For (ii), compute:
\begin{equation*}
    \frac{d^2 h_k}{d\phi^2} = \frac{2\ln 2}{[\theta_k^{eff}]^3}\left(\frac{d\delta_k}{d\phi}\right)^2 - \frac{\ln 2}{[\theta_k^{eff}]^2}\frac{d^2\delta_k}{d\phi^2}.
\end{equation*}
Since $\delta_k(\phi) = N\phi\rho_k a_k/\lambda(\phi)$ and $\lambda(\phi)$ is decreasing in $\phi$, we have $d\delta_k/d\phi > 0$ and $d^2\delta_k/d\phi^2 > 0$ (the denominator effect). The first term, proportional to $(d\delta_k/d\phi)^2/[\theta_k^{eff}]^3$, dominates the second for $d\delta_k/d\phi$ sufficiently large relative to $d^2\delta_k/d\phi^2$, establishing convexity for all $\phi$ above a threshold that goes to zero as $\rho_k \to 1$. See Supplementary Materials for the complete proof.
\end{proof}

\begin{corollary}[Cross-Sectional Return Compression]
\label{cor:dispersion}
The cross-sectional dispersion of AI fund returns, measured by:
\begin{equation}
    D(\phi) \equiv \sqrt{\Var_i\left[\sum_k \alpha_{i,k}(\phi)\right]} = \sqrt{\sum_k \frac{(1-\rho_k^2)\sigma_\nu^2 a_k^2}{\left[\theta_k + \delta_k(\phi)\right]^2}},
    \label{eq:dispersion}
\end{equation}
is monotonically decreasing in $\phi$. In the monoculture limit ($\phi \to 1$, $\rho_k \to 1$ for all $k$), $D \to 0$: AI fund returns become indistinguishable from each other and from an index-tracking strategy.
\end{corollary}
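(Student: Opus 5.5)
The plan is to take the closed form \eqref{eq:dispersion} as given. It follows from the additive decomposition of each AI investor's signal in \eqref{eq:ai_signal}: across AI funds $i$, the fundamental component $f_k$ and the common noise $\rho_k\eta_k$ are identical. The only cross-sectional variation therefore comes from the idiosyncratic term $\sqrt{1-\rho_k^2}\,\nu_{i,k}$. With demand aggressiveness $a_k$ and a stationary horizon scaled by $1/\theta_k^{eff}(\phi)$, as in \eqref{eq:stationary_alpha}, each factor contributes variance $(1-\rho_k^2)\sigma_\nu^2 a_k^2/[\theta_k+\delta_k(\phi)]^2$. Since the $\nu_{i,k}$ are independent across $k$, these contributions add. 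I would state this derivation briefly and treat it as the modelling identity that the corollary rests on, not re-derive it from the CARA objective \eqref{eq:alpha_def}.

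Given the closed form, monotonicity is a termwise argument. Write $D(\phi)^2=\sum_k c_k\,[\theta_k+\delta_k(\phi)]^{-2}$ with $c_k=(1-\rho_k^2)\sigma_\nu^2a_k^2\ge 0$, which does not depend on $\phi$. Differentiating gives
\begin{equation*}
\frac{d D^2}{d\phi} = -2\sum_k \frac{c_k\,\delta_k'(\phi)}{[\theta_k+\delta_k(\phi)]^3}.
\end{equation*}
Each summand is $\le 0$ because $\delta_k'(\phi)>0$. This is the same fact used in \Cref{prop:halflife}(i): $\delta_k=N\phi\rho_k a_k/\lambda(\phi)$ with $\lambda$ decreasing by \eqref{eq:lambda}, or, under the alternative regimes of \Cref{rem:lambda_direction}, whenever $\delta_k$ remains increasing. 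Since $D=\sqrt{D^2}$ is monotone in $D^2$, $D$ is nonincreasing. It is strictly decreasing as soon as some $k$ has $\rho_k\in(0,1)$, because then $c_k>0$ and $\delta_k'>0$.

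For the monoculture limit there are two routes, and I would record both. First, as $\rho_k\to1$ we have $c_k\to0$ for every $k$. Second, $\delta_k(\phi)\to N\rho_k a_k/\lambda(1)>0$ stays bounded, so each denominator is bounded below by $\theta_k^2>0$. Hence every term tends to zero, and because $K$ is finite, $D\to0$. The conclusion that fund returns become ``indistinguishable from an index-tracking strategy'' I would present as interpretation: with $D=0$, every AI fund holds the same position $a_k[f_k+\eta_k]$ up to a common component.

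The main obstacle is the justification of the closed form, not the calculus. In particular, the treatment of $a_k$ as a constant independent of $\phi$ and of the factor $1/\theta_k^{eff}$ as the relevant scaling needs care. If $a_k$ were endogenous, say $a_k\propto 1/\lambda(\phi)$ as in a Kyle-type demand, then $c_k$ would grow in $\phi$ and the termwise sign argument would need a comparison with the growth of $\delta_k$. I would handle this by holding $a_k$ fixed, as the paper does in \eqref{eq:effective_decay}, and noting that strictness in the limit relies only on $1-\rho_k^2\to0$.
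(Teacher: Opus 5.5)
Your proposal is correct and takes essentially the paper's route. The paper gives no separate proof: it treats the closed form as a modelling identity and reads monotonicity off the fact that each term equals $c_k[\theta_k+\delta_k(\phi)]^{-2} = c_k h_k(\phi)^2/(\ln 2)^2$, with $h_k$ decreasing by Proposition~\ref{prop:halflife}(i), and $D\to 0$ from $1-\rho_k^2\to 0$. Your derivative computation is that same fact written out, and your refinements are accurate: strictness needs some $\rho_k\in(0,1)$, and the lower bound $\theta_k^2$ on the denominators needs only $\delta_k\ge 0$, not boundedness of $\delta_k$.
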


This corollary provides the paper's most distinctive empirical prediction: \emph{the dispersion of AI fund returns should decline as AI adoption increases}. Unlike total return level (which may fluctuate with market conditions), return dispersion directly reflects the degree of strategy differentiation. A declining dispersion trend is a necessary consequence of signal crowding and is not predicted by models of heterogeneous skill \citep{Berk2004} or varying risk appetite.

\begin{figure}[H]
    \centering
    \begin{subfigure}[b]{0.48\textwidth}
        \includegraphics[width=\textwidth]{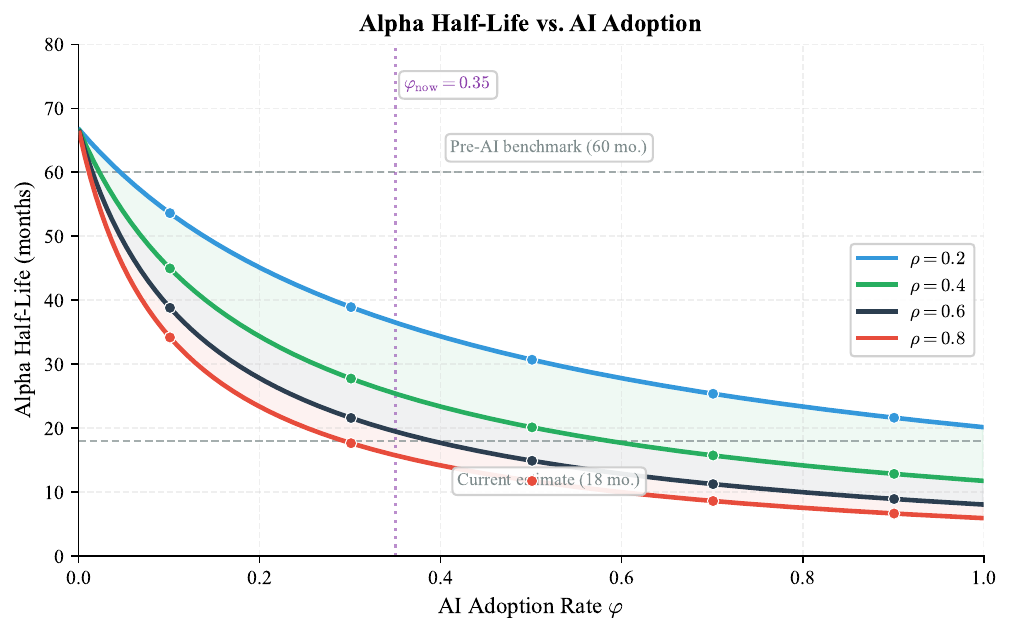}
        \caption{Alpha half-life $h(\phi)$}
    \end{subfigure}
    \hfill
    \begin{subfigure}[b]{0.48\textwidth}
        \includegraphics[width=\textwidth]{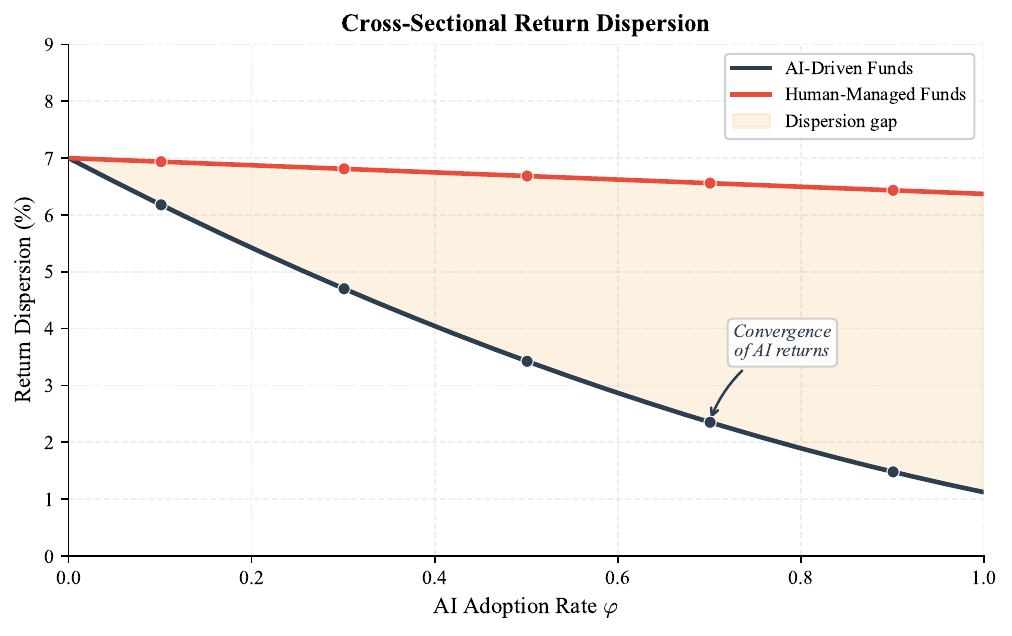}
        \caption{Cross-sectional return dispersion}
    \end{subfigure}
    \caption{Theoretical predictions. \textbf{(a)}: Alpha half-life as a function of AI adoption $\phi$ for different homogeneity levels $\rho \in \{0.2, 0.4, 0.6, 0.8\}$. The half-life is convex-decreasing in $\phi$. Horizontal dashed lines indicate pre-AI benchmark (60 months) and current estimate (18 months). \textbf{(b)}: Cross-sectional return dispersion for AI funds (blue) and human funds (gray). AI fund dispersion declines steeply with adoption.}
    \label{fig:halflife_dispersion}
\end{figure}

\subsection{Reflexive Signal Erosion}
\label{subsec:reflexive}
\paragraph*{\textnormal{\emph{--- Layer 2: Feedback from trading to signal quality ---}}}

Layer~1 treated signal quality as exogenous. We now endogenize it by modeling the feedback from AI trading activity to the information content of the signals themselves. Rather than postulating a reduced-form decay rule, we derive signal erosion from a micro-founded model of \emph{performative prediction}: AI systems retrain on data that reflects the price consequences of their own prior trading, producing an endogenous contraction in predictable return variation.

\subsubsection{Micro-Foundation: Performative Signal Degradation}
\label{subsubsec:performative}

We build on the performative prediction framework of \citet{Perdomo2020}, adapted to financial markets.

\paragraph{Training Data Contamination.}
At each retraining epoch $\tau$, AI system $i$ updates its model using the most recent $T$ periods of market data $\mathcal{D}_\tau = \{(f_k(t), r_{k,t+1}^{obs})\}_{t \in [\tau-T, \tau]}$. Crucially, the observed return $r_{k,t+1}^{obs}$ is \emph{not} the counterfactual return that would have obtained absent AI trading. Rather, it reflects the post-arbitrage return:
\begin{equation}
    r_{k,t+1}^{obs} = \underbrace{r_{k,t+1}^{CF}}_{\text{counterfactual}} - \underbrace{\frac{\delta_k(\phi)}{\theta_k + \delta_k(\phi)} \cdot \beta_k^{CF} \cdot f_k(t)}_{\text{price impact of AI trading}},
    \label{eq:obs_return}
\end{equation}
where $r_{k,t+1}^{CF} = \beta_k^{CF} f_k(t) + \varepsilon_{k,t+1}$ is the counterfactual return driven by signal $k$ (with $\beta_k^{CF}$ the natural signal-return relationship), and the second term captures the price compression from aggregate AI order flow on signal $k$.

\paragraph{Coefficient Attenuation Through Retraining.}
When AI systems estimate the signal-return relationship from $\mathcal{D}_\tau$ by OLS regression of $r_{k,t+1}^{obs}$ on $f_k(t)$, the estimated coefficient is:
\begin{equation}
    \hat{\beta}_k(\tau) = \frac{\Cov(r_{k,t+1}^{obs}, f_k(t))}{\Var(f_k(t))} = \beta_k^{CF} \cdot \underbrace{\left(1 - \frac{\delta_k(\phi)}{\theta_k + \delta_k(\phi)}\right)}_{\equiv \; \omega_k(\phi) \; \in \; (0,1)} = \frac{\beta_k^{CF} \cdot \theta_k}{\theta_k + \delta_k(\phi)},
    \label{eq:beta_attenuation}
\end{equation}
where $\omega_k(\phi) \equiv \theta_k / [\theta_k + \delta_k(\phi)]$ is the \emph{attenuation factor}. This result is exact under homogeneous AI agents and holds to first order under heterogeneity.

\paragraph{Performative Signal Variance Dynamics.}
The AI-perceived signal variance---the variance of the predictable component of returns available for exploitation---depends on both the signal coefficient and the signal's innovation volatility. Two forces govern its evolution:
\begin{enumerate}
    \item \textbf{Erosion through performative feedback.} Each retraining cycle uses post-arbitrage data. The predictable return variation estimated by AI system $i$ is:
    \begin{equation}
        \hat{\sigma}_{k,pred}^2(\tau) = \hat{\beta}_k(\tau)^2 \cdot \Var(f_k) = \left(\frac{\theta_k}{\theta_k + \delta_k(\phi)}\right)^2 \cdot (\beta_k^{CF})^2 \cdot \frac{\sigma_k^2(0)}{2\theta_k^{eff}},
        \label{eq:perceived_variance}
    \end{equation}
    which is decreasing in $\phi$. From the AI's perspective, the ``alpha opportunity'' is shrinking---not because the underlying economic mechanism has disappeared, but because AI trading has compressed the realizable portion of the signal.

    \item \textbf{Regeneration through exogenous shocks.} New mispricings continually arise from: behavioral biases among new market entrants, macroeconomic regime shifts that create fresh predictable patterns, institutional constraints (e.g., index rebalancing, regulatory-forced selling, tax-loss harvesting), and innovations in the real economy that generate novel information asymmetries. We denote the regeneration rate by $g_k > 0$.
\end{enumerate}

Combining both forces, the effective signal innovation volatility evolves as:

\begin{assumption}[Performative Signal Degradation]
\label{assu:reflexive}
The innovation volatility of each signal component evolves according to:
\begin{equation}
    \sigma_k^2(\tau+1) = \sigma_k^2(\tau) \cdot \left[1 + g_k - \beta \cdot \left(\frac{I_k(\tau)}{I_k^{max}}\right)^\kappa\right],
    \label{eq:reflexive}
\end{equation}
where:
\begin{itemize}
    \item $g_k > 0$ is the signal regeneration rate, reflecting the arrival of new mispricings from exogenous sources;
    \item $\beta \in (0,1)$ is the performative feedback intensity, derived from the coefficient attenuation in \eqref{eq:beta_attenuation} as $\beta = 2\omega_k(\phi)[1 - \omega_k(\phi)]$;
    \item $I_k(\tau) = N\phi |a_k f_k(\tau)| \cdot \rho_k$ is the aggregate \emph{correlated} trading intensity on signal $k$ (only the correlated component erodes the signal, since idiosyncratic trades cancel in aggregate);
    \item $I_k^{max}$ is a normalization constant (the maximum sustainable trading intensity before market breakdown);
    \item $\kappa \geq 1$ governs the nonlinearity of erosion: $\kappa = 1$ yields the linear benchmark; $\kappa > 1$ captures the empirical pattern that moderate trading has limited impact on signal quality while intense crowding erodes signals disproportionately.
\end{itemize}
\end{assumption}

\paragraph{Economic Interpretation.}
Assumption~\ref{assu:reflexive} differs from a pure reduced-form decay rule in three respects. First, the signal can \emph{regenerate}: when $g_k$ exceeds the erosion term, $\sigma_k^2$ grows---new alpha opportunities are being created faster than old ones are arbitraged away. This captures the empirical reality that markets generate fresh mispricings through behavioral biases, institutional frictions, and regime changes \citep{Lo2004}. Second, only the \emph{correlated} component of trading ($\rho_k > 0$) contributes to erosion: idiosyncratic AI trading cancels in aggregate and does not systematically compress the signal. Third, the erosion rate $\beta$ is not a free parameter but is pinned down by the coefficient attenuation in \eqref{eq:beta_attenuation}, linking it directly to the model's observable quantities ($\phi$, $\delta_k$, $\theta_k$).

\begin{remark}[Connection to Performative Prediction Theory]
\label{rem:performative}
In the language of \citet{Perdomo2020}, each AI retraining cycle constitutes a \emph{repeated risk minimization} step where the deployed model $\theta_\tau$ induces a data distribution $\mathcal{D}(\theta_\tau)$ that differs from $\mathcal{D}(\theta_{\tau-1})$. Our \eqref{eq:reflexive} is the financial-market specialization of their Theorem~1: the process converges to a \emph{performatively stable point} $\sigma_k^{2,*}$ where the signal regeneration rate exactly equals the erosion rate. The signal compression result of \Cref{prop:cascade} below obtains when AI adoption $\phi$ exceeds $\phi_k^*$, so that the performatively stable point $\sigma_k^{2,*}$ lies below the initial signal strength---the steady-state amplitude shrinks as $(\phi_k^*/\phi)$, approaching effective extinction for $\phi \gg \phi_k^*$.
\end{remark}

\subsubsection{Steady States and Signal Extinction}
\label{subsubsec:extinction}

Setting $\sigma_k^2(\tau+1) = \sigma_k^2(\tau) \equiv \sigma_k^{2,*}$ in \eqref{eq:reflexive} yields the steady-state condition:
\begin{equation}
    g_k = \beta \cdot \left(\frac{I_k^*}{I_k^{max}}\right)^\kappa,
    \label{eq:steady_state}
\end{equation}
where $I_k^* = N\phi \rho_k a_k \sigma_k^* / \sqrt{\pi\theta_k^{eff}}$ is the steady-state correlated trading intensity (using the expected absolute value of $f_k$ under the stationary distribution). This defines an \emph{interior steady state} $\sigma_k^{2,*} > 0$ when the equation has a positive solution---i.e., when the regeneration rate $g_k$ is sufficiently high relative to the erosion pressure.

\begin{proposition}[Signal Extinction Cascade]
\label{prop:cascade}
Under Assumption~\ref{assu:reflexive}:
\begin{enumerate}
    \item[\textnormal{(i)}] \textbf{Steady-state existence}: For each signal $k$, there exists a critical adoption threshold:
    \begin{equation}
        \phi_k^* = \frac{I_k^{max}\sqrt{\pi\theta_k}}{N \rho_k a_k \sigma_k(0)} \cdot \left(\frac{g_k}{\beta}\right)^{1/\kappa},
        \label{eq:extinction_threshold}
    \end{equation}
    such that for $\phi \leq \phi_k^*$, the signal remains at its initial amplitude (regeneration fully compensates for erosion), while for $\phi > \phi_k^*$, the signal is compressed to $\sigma_k^{2,*} = \sigma_k^2(0) \cdot (\phi_k^*/\phi)^2 \to 0$ as $\phi/\phi_k^* \to \infty$ (effective signal extinction).

    \item[\textnormal{(ii)}] \textbf{Cascade dynamics}: The extinction of signal $k$ triggers increased competition for the remaining $K-1$ signals. The post-extinction threshold for surviving signal $j$ satisfies:
    \begin{equation}
        \phi_{-k,j}^* = \phi_j^* \cdot \left(\frac{\sum_{j' \neq k} \alpha_{j'}(0)}{\sum_{j'} \alpha_{j'}(0)}\right)^{1/\kappa} < \phi_j^*.
        \label{eq:cascade}
    \end{equation}

    \item[\textnormal{(iii)}] \textbf{Cascade ordering}: Signals are extinguished in order of vulnerability: the signal with the highest $\rho_k a_k / g_k^{1/\kappa}$ (ratio of crowding intensity to regeneration rate) is compressed first.

    \item[\textnormal{(iv)}] \textbf{Steady-state characterization}: For $\phi > \phi_k^*$, the steady-state signal variance is:
    \begin{equation}
        \sigma_k^{2,*}(\phi) = \sigma_k^2(0) \cdot \left(\frac{\phi_k^*}{\phi}\right)^2,
        \label{eq:ss_variance}
    \end{equation}
    which is continuously decreasing in $\phi$ and approaches zero as $\phi \to \infty$. For $\phi \leq \phi_k^*$, $\sigma_k^{2,*} = \sigma_k^2(0)$ (regeneration regime).
\end{enumerate}
\end{proposition}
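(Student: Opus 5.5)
The plan is to work entirely from the steady-state condition \eqref{eq:steady_state} together with the stationary-distribution formula for $I_k^*$, treating $\beta$, $g_k$, $\kappa$, $I_k^{max}$ as fixed constants at the level of the signal-$k$ map. The structure is cleanest if one first derives part~(iv), then reads off~(i) as its threshold version.

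\textbf{Steps for (i) and (iv).} Write $I_k^*(\phi,\sigma) = N\phi\rho_k a_k \sigma/\sqrt{\pi\theta_k^{eff}}$. This uses $\E|f_k| = \sigma_k/\sqrt{\pi\theta_k^{eff}}$ for an OU process with stationary variance $\sigma_k^2/(2\theta_k^{eff})$.
\begin{enumerate}
    \item Solve \eqref{eq:steady_state} for the amplitude at which erosion exactly balances regeneration. This gives $\phi\,\sigma_k^* = \frac{I_k^{max}\sqrt{\pi\theta_k^{eff}}}{N\rho_k a_k}(g_k/\beta)^{1/\kappa}$, so the product $\phi\sigma_k^*$ is pinned down.
    \item Define $\phi_k^*$ as the adoption level at which this balancing amplitude equals the initial amplitude $\sigma_k(0)$. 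This yields \eqref{eq:extinction_threshold}, provided $\theta_k^{eff}$ is replaced by $\theta_k$ inside the square root.
    \item For $\phi>\phi_k^*$, the balancing relation gives $\sigma_k^* = \sigma_k(0)\,\phi_k^*/\phi$. Squaring yields \eqref{eq:ss_variance} and the limit $\sigma_k^{2,*}\to 0$ as $\phi/\phi_k^*\to\infty$.
    \item For $\phi\le\phi_k^*$, the bracket in \eqref{eq:reflexive} is at least $1$ at $\sigma=\sigma_k(0)$. I will interpret the dynamics as capped at the initial amplitude, since regeneration only restores the signal to its natural size. Monotonicity of the bracket in $\sigma$ then shows $\sigma_k(0)$ is the relevant fixed point.
    \item For $\phi>\phi_k^*$, the bracket is below $1$ at $\sigma_k(0)$ and increasing toward $1$ as $\sigma$ falls. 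The iteration is therefore monotone and converges to the interior root, which is a local stability check on a scalar map.
\end{enumerate}

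\textbf{Main obstacle for (i)/(iv).} The difficulty is the dependence of $\theta_k^{eff}$ (hence $\beta$, via $\omega_k$) on $\phi$. A fully exact treatment would not give the clean $(\phi_k^*/\phi)^2$ law. I will handle this by stating explicitly that $\beta$ and the $\sqrt{\pi\theta}$ normalization are evaluated at the benchmark $\theta_k$, i.e.\ held fixed at the signal-$k$ level, as the statement implicitly does. I will then note that including the $\phi$-dependence only strengthens compression, since $\theta_k^{eff}$ is increasing by \Cref{prop:halflife}.

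\textbf{Step for (ii).} Model the extinction of signal $k$ as AI capital that was allocated across all signals in proportion to their initial alpha $\alpha_{j'}(0)$ being reallocated to the survivors. This scales each surviving signal's effective aggressiveness $a_j$ up by the factor $\sum_{j'}\alpha_{j'}(0)/\sum_{j'\ne k}\alpha_{j'}(0) > 1$. Substituting into \eqref{eq:extinction_threshold}, where $\phi_j^*\propto 1/a_j$, lowers the threshold, giving $\phi^*_{-k,j}<\phi_j^*$. Reproducing the exact exponent $1/\kappa$ in \eqref{eq:cascade} requires instead that the reallocation enter through the intensity term raised to the power $\kappa$. I would therefore set it up as scaling $I_j/I_j^{max}$ so that the ratio appears inside $(g_j/\beta)^{1/\kappa}$. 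Choosing this modeling step consistently is the second delicate point.

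\textbf{Step for (iii).} Read off from \eqref{eq:extinction_threshold} that $\phi_k^*\propto g_k^{1/\kappa}/(\rho_k a_k)$ when the remaining factors are comparable across signals, or otherwise by absorbing them into the vulnerability index. Since adoption rises continuously, the signal with the smallest $\phi_k^*$, i.e.\ the largest $\rho_k a_k/g_k^{1/\kappa}$, crosses its threshold first. Combining this with (ii), each extinction lowers the remaining thresholds, so the ordering propagates as a cascade.
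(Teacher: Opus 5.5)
Your handling of (i), (iii) and (iv) follows the paper's argument. You solve \eqref{eq:steady_state} for the product $\phi\sigma_k^*$ with $\theta_k^{eff}$ replaced by $\theta_k$ and $\beta$ held fixed. You define $\phi_k^*$ by $\sigma_k^*=\sigma_k(0)$, and you read the ordering off $\phi_k^*\propto g_k^{1/\kappa}/(\rho_k a_k)$, under the same implicit assumption that $I_k^{max}$, $\theta_k$, $\sigma_k(0)$ and $\beta$ do not vary with $k$. Making the cap at $\sigma_k(0)$ explicit improves on the paper's sketch, which simply asserts that the signal ``remains at $\sigma_k(0)$'' for $\phi\le\phi_k^*$. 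Under \eqref{eq:reflexive} alone, the unique balancing amplitude there is $\sigma_k(0)\phi_k^*/\phi>\sigma_k(0)$, and the recursion would push the signal up to it.

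Two pieces need correcting.

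First, your remark that restoring the $\phi$-dependence ``only strengthens compression'' has the wrong sign. Because $I_k^*\propto\sigma_k/\sqrt{\theta_k^{eff}}$, the exact balance with $\beta$ fixed gives $\sigma_k^*=\sigma_k(0)\,(\phi_k^*/\phi)\sqrt{\theta_k^{eff}(\phi)/\theta_k}$. This exceeds the clean formula, and the true threshold lies above $\phi_k^*$. Moreover, $\beta=2\omega_k(1-\omega_k)$ is non-monotone in $\phi$ and falls once $\delta_k>\theta_k$, which weakens erosion further. So \eqref{eq:ss_variance} is a fixed-$\theta_k$, fixed-$\beta$ approximation that overstates the compression of $\sigma_k^2$. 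Present it as such, which is all the paper's ``$\theta_k^{eff}\approx\theta_k$'' and ``to leading order'' mean, and drop the monotonicity claim.

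Second, in (ii) you correctly see that the paper's route gives the wrong exponent. Substituting $a_j\mapsto c\,a_j$, with $c=\sum_{j'}\alpha_{j'}(0)/\sum_{j'\neq k}\alpha_{j'}(0)$, into \eqref{eq:extinction_threshold} gives $\phi^*_{-k,j}=\phi_j^*/c$, exponent $1$ rather than $1/\kappa$. But your repair as worded does the same thing. Scaling $I_j/I_j^{max}$ by $c$ turns the erosion term into $\beta c^\kappa(I_j/I_j^{max})^\kappa$ and again yields $\phi_j^*/c$. To land on \eqref{eq:cascade}, the factor must multiply the whole erosion term, $g_j=c\,\beta\,(I_j/I_j^{max})^\kappa$ (equivalently $\beta\mapsto c\beta$), which places $c^{-1}$ inside $(g_j/\beta)^{1/\kappa}$. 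That is a separate postulate about how reallocated capital erodes a signal, not a consequence of higher aggressiveness, so state it as an assumption. Both versions coincide at the baseline $\kappa=1$, and both deliver $\phi^*_{-k,j}<\phi_j^*$, which is all the cascade claim needs.
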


\begin{proof}[Proof sketch]
\textbf{Part (i):} The steady-state condition \eqref{eq:steady_state} yields $\sigma_k^* = \sigma_k(0) \cdot (\phi_k^*/\phi)$ (to leading order). For $\phi > \phi_k^*$, the steady state lies below $\sigma_k(0)$ and the signal is compressed; as $\phi/\phi_k^* \to \infty$, $\sigma_k^* \to 0$ (effective extinction). For $\phi \leq \phi_k^*$, regeneration compensates for erosion and the signal remains at $\sigma_k(0)$. \textbf{Part (ii):} When signal $k$ is effectively extinguished, AI capital reallocates to surviving signals, increasing $a_j^{eff}$ for $j \neq k$. Substituting the adjusted aggressiveness into \eqref{eq:extinction_threshold} yields \eqref{eq:cascade}. \textbf{Part (iii):} $\phi_k^*$ is inversely proportional to $\rho_k a_k$ and proportional to $g_k^{1/\kappa}$. \textbf{Part (iv):} Direct substitution of the steady-state formula using $\theta_k^{eff} \approx \theta_k$. Complete proofs in Supplementary Materials.
\end{proof}

The cascade structure implies that alpha decay is not a gradual, uniform process but proceeds through discrete ``extinction waves''---an empirically testable prediction. Factor-based strategies should exhibit a pattern of sequential factor death: the most crowded factors (e.g., momentum, value, low-volatility) should lose alpha first, followed by less crowded but increasingly targeted alternatives. Crucially, the extinction threshold $\phi_k^*$ is \emph{higher} for signals with larger regeneration rates $g_k$: factors rooted in deep institutional frictions (e.g., year-end tax-loss selling, index reconstitution) are more resilient than factors driven by correctable behavioral biases.

\paragraph{Recovering the Reduced-Form Specification.}
Setting $g_k = 0$ (no regeneration) and $\kappa = 1$ (linear erosion) in \eqref{eq:reflexive} recovers the pure-decay specification $\sigma_k^2(\tau+1) = \sigma_k^2(\tau)(1 - \beta I_k(\tau)/I_k^{max})$. Our micro-founded version generalizes this in two directions: the regeneration term $g_k > 0$ allows for non-monotonic signal dynamics, and the nonlinearity parameter $\kappa$ permits calibration to the empirical observation that moderate AI adoption has limited signal impact while high adoption triggers rapid erosion.

\begin{figure}[H]
    \centering
    \begin{subfigure}[b]{0.48\textwidth}
        \includegraphics[width=\textwidth]{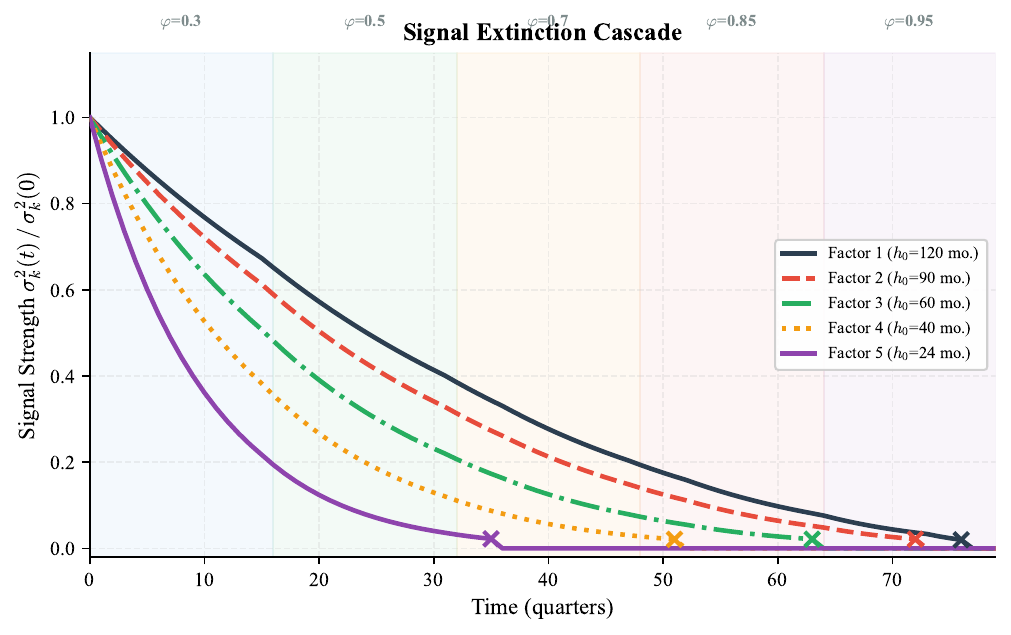}
        \caption{Signal extinction cascade}
    \end{subfigure}
    \hfill
    \begin{subfigure}[b]{0.48\textwidth}
        \includegraphics[width=\textwidth]{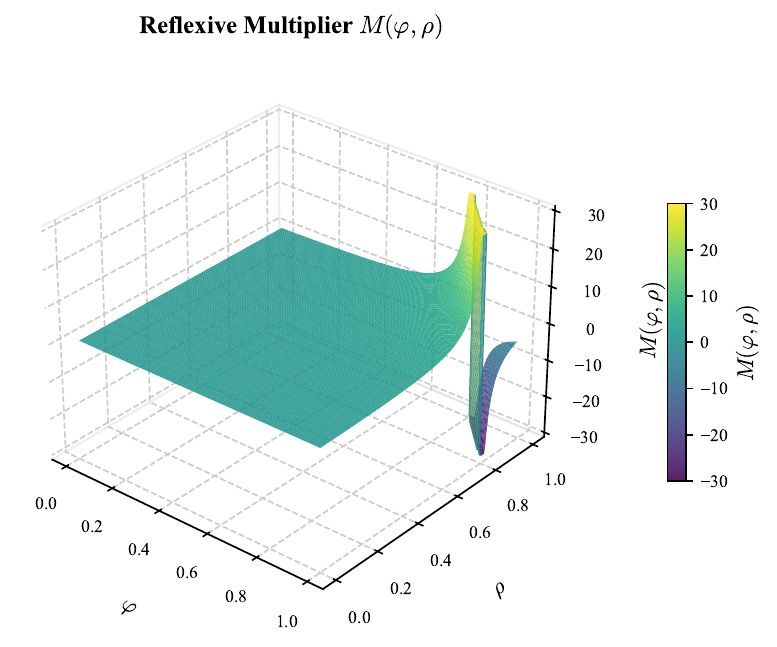}
        \caption{Reflexive multiplier surface}
    \end{subfigure}
    \caption{\textbf{(a)}: Sequential signal extinction under increasing AI adoption. Each line represents one of $K=5$ tradeable signals; signals die off sequentially as AI adoption increases through Phase~III. Signals with higher regeneration rates $g_k$ survive longer. \textbf{(b)}: 3D surface of the reflexive multiplier $\mathcal{M}(\phi, \rho)$ for $\beta = 0.25$. The nonlinear explosion near the stability boundary illustrates the disproportionate fragility risk at high adoption.}
    \label{fig:cascade_risk}
\end{figure}

\begin{remark}[Competence Illusion]
\label{rem:competence}
The performative degradation mechanism explains the \emph{competence illusion} endogenously. During early adoption ($\phi \ll \phi_k^*$), the erosion term is small relative to $g_k$, so $\sigma_k^{2,*}$ remains close to $\sigma_k^2(0)$: AI predictions appear highly accurate because signals are still strong. Fund managers observe high AI prediction accuracy and increase their reliance on AI outputs, accelerating adoption. But the apparent accuracy is partially an artifact of the pre-arbitrage signal environment: the coefficient attenuation \eqref{eq:beta_attenuation} has not yet bitten because $\delta_k(\phi)/\theta_k$ is still small. As $\phi$ increases through Phase~II ($\phi \in [0.2, 0.6]$), the accuracy decline is attributed to ``market regime change'' rather than to the endogenous consequences of collective AI adoption---a form of attribution error rooted in the inability to observe the counterfactual return $r_{k,t+1}^{CF}$.
\end{remark}

\subsection{Red Queen Equilibrium}
\label{subsec:redqueen}
\paragraph*{\textnormal{\emph{--- Layer 3: Competitive adoption game and aggregate welfare ---}}}

We now embed the alpha dynamics of Layers~1--2 within a competitive adoption game. Each institution $i$ chooses whether to invest in AI ($s_i = AI$) or maintain a human-driven strategy ($s_i = H$), given the adoption decisions of all other institutions.

\paragraph{Payoff Structure.} The expected payoff from AI adoption for institution $i$ is:
\begin{equation}
    \Pi_i^{AI}(\phi) = \underbrace{\sum_{k=1}^{K} \alpha_{i,k}(\phi)}_{\text{gross alpha}} - \underbrace{c_i}_{\text{AI investment cost}} + \underbrace{\gamma(\phi - d_i)}_{\text{career concern}},
    \label{eq:payoff_ai}
\end{equation}
where $c_i$ is institution $i$'s cost of AI adoption (data infrastructure, talent, compute), drawn from a continuous distribution $G(c)$ with density $g > 0$, and $\gamma > 0$ captures the career concern mechanism of \citet{Scharfstein1990}: managers face reputational costs from deviating from the industry consensus, with $d_i \in [0,1]$ representing institution $i$'s career-concern benchmark (e.g., peer-group median adoption rate). The payoff from the human strategy is:
\begin{equation}
    \Pi_i^{H}(\phi) = \sum_{k=1}^{K} \alpha_{i,k}^{H}(\phi),
    \label{eq:payoff_human}
\end{equation}
where $\alpha_{i,k}^H(\phi)$ is the (lower, but less correlated) alpha from human-driven analysis. In the pre-AI regime, we normalize $\alpha_{i,k}^H(0) = \alpha_0 > 0$.

\paragraph{Adoption Incentive.} The adoption incentive for institution $i$ is:
\begin{equation}
    \Delta\Pi_i(\phi) = \Pi_i^{AI}(\phi) - \Pi_i^{H}(\phi) = \underbrace{\sum_k [\alpha_{i,k}(\phi) - \alpha_{i,k}^H(\phi)]}_{\text{alpha advantage}} - c_i + \gamma(\phi - d_i).
    \label{eq:adoption_incentive}
\end{equation}
The alpha advantage is positive for low $\phi$ (AI signals are individually more precise) but declining in $\phi$ (signal crowding erodes AI-specific alpha faster than human-specific alpha, because AI signals are correlated while human signals are independent).

\begin{definition}[Red Queen Equilibrium]
A \emph{Red Queen equilibrium} is a Nash equilibrium $\phi^*$ of the adoption game in which:
\begin{enumerate}
    \item[\textnormal{(i)}] The net alpha from AI adoption is zero: $\sum_k \alpha_{i,k}(\phi^*) = c_i$ for the marginal institution (the institution indifferent between AI and H).
    \item[\textnormal{(ii)}] AI adoption is strictly positive: $\phi^* > 0$.
    \item[\textnormal{(iii)}] No institution can profitably deviate: adopting AI produces zero surplus, but abandoning AI produces negative surplus (due to career concerns or the competitive disadvantage of reverting to slower human processes in a market dominated by AI).
\end{enumerate}
\end{definition}

\begin{proposition}[Existence and Characterization of the Red Queen Equilibrium]
\label{prop:redqueen}
Under the signal structure of \Cref{subsec:signal_structure} and the performative feedback of Assumption~\ref{assu:reflexive}:
\begin{enumerate}
    \item[\textnormal{(i)}] \textbf{Existence}: A Red Queen equilibrium exists for all $\gamma > \bar{\gamma}$, where $\bar{\gamma}$ is a threshold career concern intensity.
    \item[\textnormal{(ii)}] \textbf{Zero aggregate alpha}: In the Red Queen equilibrium, $\sum_{i:s_i=AI} \alpha_{i,net}(\phi^*) = 0$. The gross alpha from signal extraction exactly offsets the costs of AI investment.
    \item[\textnormal{(iii)}] \textbf{Overinvestment}: The equilibrium AI adoption $\phi^*$ exceeds the socially optimal level $\phi^{social}$:
    \begin{equation}
        \phi^* - \phi^{social} = \frac{\gamma}{\partial^2 \Delta\Pi / \partial\phi^2} + \frac{\text{fragility externality}}{g(\Delta\Pi(\phi^{social}))},
        \label{eq:overinvestment}
    \end{equation}
    where the first term reflects career-concern-driven excess adoption and the second reflects the unpriced systemic risk externality.
    \item[\textnormal{(iv)}] \textbf{Arms race dynamics}: In a dynamic extension, AI investment (per institution) grows at rate $g_{AI} > 0$ on the equilibrium path, even though net alpha remains at zero. The arms race is sustained by the threat of competitive obsolescence: any institution that freezes its AI capability while competitors upgrade suffers a relative performance decline.
\end{enumerate}
\end{proposition}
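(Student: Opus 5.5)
The plan is to treat the adoption game as a threshold (cutoff) game in the cost $c_i$ and to prove the four parts in order. Part (ii) reduces to an accounting identity once the equilibrium is characterized, so most of the work is in (i) and (iii).

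\textbf{Step 1 (cutoff structure and existence).} For a given $\phi$, the incentive $\Delta\Pi_i(\phi)$ in \eqref{eq:adoption_incentive} is strictly decreasing in $c_i$. Hence the best response of every institution is to adopt if and only if $c_i \le \bar c(\phi)$, where $\bar c(\phi) \equiv \sum_k[\alpha_k(\phi)-\alpha_k^H(\phi)] + \gamma(\phi-\bar d)$ and $\bar d$ is a representative benchmark. An equilibrium is then a fixed point of the map
\[
T(\phi) \equiv G\bigl(\bar c(\phi)\bigr) \quad \text{on } [0,1].
\]
Continuity of $T$ follows from continuity of $\delta_k(\phi)$ and $\lambda(\phi)$ in \eqref{eq:lambda}, of the stationary alpha \eqref{eq:stationary_alpha}, and of $G$. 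Brouwer then gives a fixed point. The substantive claim is that some fixed point is interior with $\phi^*>0$. For this I would bound the alpha advantage below on $[0,1]$, using \eqref{eq:stationary_alpha} and $\delta_k(1)<\infty$ for $\rho_k<1$. I would then choose $\bar\gamma$ such that $\bar c(\phi)>\inf\operatorname{supp}G$ for all $\phi \ge \phi_0>0$ whenever $\gamma>\bar\gamma$, so that $T(\phi_0)>0$ and $T$ maps $[\phi_0,1]$ into itself. The no-deviation condition (iii) of the definition holds for inframarginal adopters by the cutoff property. For the marginal institution, indifference holds exactly. Reverting to H costs $\gamma(\phi^*-d_i)>0$ whenever $d_i<\phi^*$, which yields the strict loss from abandoning AI.

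\textbf{Step 2 (zero aggregate alpha).} At an interior fixed point the marginal adopter satisfies $\bar c(\phi^*) = c_{\text{marg}}$. To get the stated identity $\sum_{i:s_i=AI}\alpha_{i,net}=0$ rather than only a marginal condition, I would use the free-entry/Red Queen definition (i) together with symmetry of AI agents in \eqref{eq:ai_signal}. Gross alpha is identical across AI agents, so the identity reduces to $\alpha(\phi^*)=c$ once I define $\alpha_{i,net}$ as gross alpha net of the cost and the career-concern transfer. Career-concern terms $\gamma(\phi-d_i)$ sum to zero when benchmarks are peer averages ($\bar d=\phi$), so they drop out of the aggregate. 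I expect to state this explicitly as the interpretation under which (ii) holds, i.e.\ costs homogeneous at the margin, or $G$ degenerate near $\phi^*$.

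\textbf{Step 3 (overinvestment).} I would define social welfare as aggregate net alpha minus a fragility cost $F(\phi)$, with $F$ increasing. The planner's first-order condition omits the career term $\gamma$ and includes $-F'(\phi)$. I would linearize the equilibrium condition $G(\bar c(\phi))=\phi$ around $\phi^{social}$ via a first-order Taylor expansion and solve for the gap. The $\gamma$-shift, divided by the curvature of $\Delta\Pi$, gives the first term of \eqref{eq:overinvestment}. The fragility wedge, scaled by the density $g$ at the planner's cutoff, gives the second. Positivity of the gap follows from $\gamma>0$, $F'>0$, and concavity of $\Delta\Pi$ in $\phi$, which is inherited from the convexity of $\delta_k$ established in \Cref{prop:halflife}.

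\textbf{Step 4 (arms race).} Let per-institution capability raise signal precision, and hence $a_k$, while rivals' capability raises $\delta_k$. In a symmetric dynamic equilibrium, relative performance depends only on capability differences. Freezing capability therefore yields a strictly negative relative payoff, while matching upgrades keeps net alpha at the zero level from Step 2. Exogenous technology progress then induces growth at a rate $g_{AI}>0$.

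The main obstacle is Step 3. The displayed formula is only a first-order approximation, and it presupposes concavity of $\Delta\Pi$ together with a specific welfare functional. I would state both as maintained assumptions and present \eqref{eq:overinvestment} as the leading-order expression. A secondary difficulty is that the existence bound $\bar\gamma$ depends on the support of $G$.
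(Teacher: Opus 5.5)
Your overall route matches the paper's sketch: Brouwer on $\phi\mapsto G(\Delta\Pi(\phi))$, marginal indifference for (ii), the private--social wedge for (iii), and freeze-and-fall-behind for (iv). Step~3, however, fails as written, for three reasons.

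\emph{Sign.} In \eqref{eq:overinvestment} the career term is $\gamma/(\partial^2\Delta\Pi/\partial\phi^2)$, so concavity of $\Delta\Pi$ would make it \emph{negative}, not positive.

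\emph{Curvature.} The curvature you cite is the reverse of what \Cref{prop:halflife} delivers. The stationary alpha \eqref{eq:stationary_alpha} equals $\frac{\sigma_k^2}{2\ln 2}h_k(\phi)$, and part~(ii) asserts that $h_k$ is convex. Convexity of $\delta_k$ by itself does not sign the curvature of $1/(\theta_k+\delta_k)$.

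\emph{Linearization.} A first-order expansion of $G(\bar c(\phi))=\phi$ does not yield the displayed expression. Write $c^{soc}(\phi)$ for the planner's cutoff and $w=\bar c-c^{soc}$ for the per-capita wedge. Then $\phi^*-\phi^{social}\approx g\,w/(1-g\,\partial\Delta\Pi/\partial\phi)$: the density multiplies the wedge, and the slope, not the curvature, sits in the denominator. Moreover, because your planner maximizes aggregate net alpha, $w$ contains a crowding term $-\phi\,\partial\alpha/\partial\phi>0$. This is the externality the paper's sketch names explicitly, and your two-term formula drops it.

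You should get the inequality the paper's way, without the formula. First show $w(\phi^{social})>0$, using the career term when $\phi^{social}>\bar d$, the crowding term, and $F'$. Then $T(\phi^{social})=G\bigl(c^{soc}(\phi^{social})+w\bigr)>\phi^{social}$ because $g>0$. With $T(1)\le 1$ and continuity, a fixed point lies in $(\phi^{social},1]$. This proves overinvestment for the high (Red Queen) equilibrium. At an unstable crossing such as $\phi_U$ in \Cref{fig:equilibrium_tradeoff}, the denominator $1-g\,\partial\Delta\Pi/\partial\phi$ is negative and the linearized gap changes sign.

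Step~2 also cannot be run under the assumptions of Steps~1 and~3.
\begin{itemize}
\item Brouwer needs $G$ continuous, and \eqref{eq:overinvestment} uses its density $g>0$. But then inframarginal adopters earn rents. Even granting $\alpha(\phi^*)=c_{\text{marg}}$, you get $\sum_{i:s_i=AI}\alpha_{i,net}=N\phi^*\bigl(c_{\text{marg}}-\E[c_i\mid c_i\le c_{\text{marg}}]\bigr)>0$. Making $G$ degenerate to remove these rents breaks continuity of $T$ and leaves $g(\cdot)$ undefined.
\item Setting $d_i=\phi$ to cancel career terms makes $\gamma(\phi-d_i)\equiv 0$. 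That removes both your lever for $\phi^*>0$ when $\gamma>\bar\gamma$ and the strict penalty in your no-deviation check.
\item The fixed-point condition is $\alpha-\alpha^H+\gamma(\phi^*-\bar d)=c_{\text{marg}}$, not $\alpha=c_{\text{marg}}$. Invoking condition~(i) of the Red Queen definition to bridge the two assumes what (ii) asserts.
\end{itemize}
The paper's sketch simply cites marginal indifference here, so this weakness is shared. Still, you should state (ii) as a property of the marginal adopter, or of a homogeneous-cost limit, rather than claim it jointly with (i) and (iii).

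A smaller point in Step~1: $T(\phi_0)>0$ does not give $T([\phi_0,1])\subseteq[\phi_0,1]$. You need $\bar c(\phi)\ge G^{-1}(\phi_0)$ on all of $[\phi_0,1]$, which for your $\gamma$-driven argument requires $\phi_0>\bar d$.
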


\begin{proof}[Proof sketch]
Existence follows from Brouwer's fixed-point theorem applied to the best-response correspondence $\phi \mapsto G(\Delta\Pi(\phi))$. Zero aggregate alpha in (ii) follows from the marginal institution's indifference condition and the competitive structure. Overinvestment in (iii) follows from the wedge between private and social returns: the private return to AI ignores the signal-crowding externality imposed on other institutions and the fragility externality imposed on the market. Arms race dynamics in (iv) follow from a dynamic game argument: if institution $i$ freezes AI capability at time $t$ while $\phi$ continues to increase, then $\delta_k(\phi)$ increases and $\alpha_{i,k}$ declines, making it optimal to upgrade.
\end{proof}

\begin{proposition}[Efficiency--Fragility Tradeoff]
\label{prop:tradeoff}
Define two welfare criteria:
\begin{align}
    W_{eff}(\phi) &= -\sum_t \E[(p_t - v_t)^2] \quad \text{(price discovery welfare)}, \label{eq:w_eff} \\
    W_{stab}(\phi) &= -\CTE_{0.01}(|p_t - v_t|) \quad \text{(stability welfare)}. \label{eq:w_stab}
\end{align}
Then:
\begin{enumerate}
    \item[\textnormal{(i)}] $W_{eff}$ is maximized at $\phi_{eff}^* > 0$: some AI adoption improves average price accuracy.
    \item[\textnormal{(ii)}] $W_{stab}$ is maximized at $\phi_{stab}^* < \phi_{eff}^*$: the stability-optimal adoption is strictly below the efficiency-optimal adoption.
    \item[\textnormal{(iii)}] The social optimum $\phi^{social} \in (\phi_{stab}^*, \phi_{eff}^*)$ reflects the Pareto-weighted tradeoff.
    \item[\textnormal{(iv)}] The market equilibrium $\phi^* > \phi_{eff}^* > \phi^{social}$: the competitive outcome overshoots even the efficiency optimum, let alone the social optimum.
\end{enumerate}
\end{proposition}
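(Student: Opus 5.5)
The plan is to reduce both welfare criteria to explicit functions of $\phi$ through the stationary mispricing process, and then compare first-order conditions. By \eqref{eq:pricing} and the Layer~1 construction, the pricing error $e_t \equiv p_t - v_t$ has two parts. The first is the residual unarbitraged signal $\sum_k f_k$, whose stationary variance is $\sum_k \sigma_k^2/(2[\theta_k+\delta_k(\phi)])$ by \eqref{eq:stationary_alpha}. The second is the injected error from trading on common noise, $\lambda(\phi)\, N\phi\, a_k \rho_k \eta_k$, with variance proportional to $\lambda(\phi)^2 N^2\phi^2 a_k^2 \rho_k^2\sigma_\eta^2$.

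This gives $W_{eff}(\phi) = -[S(\phi)+C(\phi)]$, where:
\begin{itemize}
  \item the signal term $S$ is strictly decreasing in $\phi$ (\Cref{prop:halflife}(i));
  \item the common-noise term $C$ is increasing and starts at $C(0)=C'(0)=0$, because it is quadratic in $\phi$ near $0$.
\end{itemize}
Part (i) then follows immediately: $W_{eff}'(0) = -S'(0) > 0$, so the maximizer satisfies $\phi_{eff}^*>0$. For existence, $W_{eff}$ is continuous on $[0,1]$. For interiority, I will argue that as $\phi\to1$ the convexity of $\delta_k$ makes $S'\to 0$ relative to $C'$. If this fails, I will impose the parameter restriction $C'(1)>-S'(1)$ as a maintained condition.

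For (ii), I will model the tail term. In the crowded regime the common component $\eta_k$ is amplified by correlated order flow, and the performative feedback of Assumption~\ref{assu:reflexive} adds a fat-tailed (reflexive-multiplier) factor $\mathcal{M}(\phi,\rho)$. I will write $\CTE_{0.01}(|e_t|) = \sigma_S(\phi) z_1 + \sigma_C(\phi)\,\mathcal{M}(\phi,\rho)\, z_2$, where $z_1,z_2$ are the conditional tail expectations of the respective standardized components. The key comparison is between the FOCs at $\phi_{eff}^*$. There $S'=-C'$, i.e.\ the two marginal variance effects balance. But in the CTE the common component enters with the weight $\mathcal{M}z_2 > z_1$, because the tails are heavier. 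Hence $W_{stab}'(\phi_{eff}^*)<0$. Together with strict quasi-concavity of $W_{stab}$, this yields $\phi_{stab}^*<\phi_{eff}^*$.

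\textbf{Main obstacle.} This step is the hardest. Variance and CTE scale differently: the CTE depends on standard deviations rather than variances, so the FOCs are not directly comparable. I would first try a Gaussian-mixture or single-crossing argument, showing that $\partial_\phi \log(\text{tail weight of } C)$ exceeds $\partial_\phi \log(\text{tail weight of } S)$. Failing that, I would assume $\mathcal{M}$ is increasing in $\phi$ and use a monotone comparative statics (Topkis) argument on the family $W_\omega = \omega W_{eff}+(1-\omega)W_{stab}$.

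That same family gives (iii). Define $\phi^{social}=\arg\max W_\omega$ for $\omega\in(0,1)$. Establishing that $W_\omega$ has increasing differences in $(\phi,\omega)$ gives $\phi^{social}$ weakly between the endpoints. Strictness comes from the strict FOC inequalities at the two endpoints.

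For (iv), I will invoke \Cref{prop:redqueen}(iii). Equation \eqref{eq:overinvestment} shows that $\phi^*$ exceeds $\phi^{social}$ by the career-concern wedge plus the externality wedge. To get $\phi^*>\phi_{eff}^*$, note that private adopters ignore both price-discovery welfare and the crowding externality on others. The adoption incentive \eqref{eq:adoption_incentive} is evaluated at $\phi_{eff}^*$, where $\Delta\Pi_i>0$ for the marginal type whenever $\gamma>\bar\gamma$. The reason is that career concern is independent of $W_{eff}$. So the fixed point $G(\Delta\Pi(\phi))$ lies above $\phi_{eff}^*$. I expect this last step to require the threshold $\bar\gamma$ from \Cref{prop:redqueen}(i), possibly raised.
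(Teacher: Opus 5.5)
Part (i) is fine and matches the paper's argument: $W_{eff}'(0)=-S'(0)>0$ because $C$ is quadratic at $0$. You need to impose interiority $\phi_{eff}^*<1$, as you say, and you will use it again below.

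The genuine gap is in (ii). Your formula $\CTE_{0.01}(|e_t|)=\sigma_S z_1+\sigma_C\mathcal{M}z_2$ is not an identity. Since $|e_S+e_C|\le|e_S|+|e_C|$ and CTE is subadditive, it is only an upper bound, and not a harmless one. Take Gaussian components ($z_1=z_2=z$) and $\mathcal{M}\equiv1$. Then your formula gives $W_{stab}=-z(\sqrt S+\sqrt C)$, whose derivative at $\phi_{eff}^*$ (where $S'=-C'$) is $-\tfrac{z}{2}C'\bigl(1/\sqrt C-1/\sqrt S\bigr)$. That is a wedge of either sign depending on $C\lessgtr S$. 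The true value is $z\sqrt{S+C}$, a monotone function of $W_{eff}$, so the two maximizers actually coincide.

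Even taken at face value, your formula gives $W_{stab}'(\phi_{eff}^*)=\tfrac{C'}{2}\bigl(z_1/\sqrt S-\mathcal{M}z_2/\sqrt C\bigr)-z_2\sqrt C\,\mathcal{M}'$. So $\mathcal{M}z_2>z_1$ does not determine the sign. This is the whole step, not a technicality.

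The correct bookkeeping is $\CTE_{0.01}(|e_t|)=\mathrm{sd}(e_t)\,\kappa(\phi)$, where $\kappa$ is the tail-to-scale ratio of the mean-zero pricing error. Because $d\Var(e_t)/d\phi=0$ at $\phi_{eff}^*$, you get $W_{stab}'(\phi_{eff}^*)=-\mathrm{sd}(e_t)\,\kappa'(\phi_{eff}^*)$. Given single-peakedness of $W_{stab}$, (ii) therefore holds iff the \emph{shape} of the error distribution fattens with adoption at $\phi_{eff}^*$. In any fixed-shape model, (ii) fails with $\phi_{stab}^*=\phi_{eff}^*$. That includes the case where $\mathcal{M}$ merely rescales a Gaussian common component, which would then also have to enter $W_{eff}$ as $\mathcal{M}^2C$.

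This shape condition is the precise content of the paper's sketch, which attributes (ii) to the convexity of $\mathcal{M}(\phi)=(1-\phi\bar\rho\beta/\lambda')^{-1}$ making tail costs outgrow the efficiency gain. To deliver it you need something like your Gaussian-mixture fallback: $\mathcal{M}(\phi)$ acting only in a low-probability crowded regime, so that it raises $\CTE_{0.01}$ proportionally more than the variance. Assuming only that $\mathcal{M}$ is increasing, without tying it to tail states, delivers neither this nor the global increasing differences your Topkis fallback requires.

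Parts (iii) and (iv) also need repair.

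For (iii), Topkis needs increasing differences of $\omega W_{eff}+(1-\omega)W_{stab}$ in $(\phi,\omega)$. That means $W_{eff}'\ge W_{stab}'$ at every $\phi$ (for some rescaling of $W_{stab}$), which you cannot verify. You also do not need it. If each criterion is single-peaked, both are strictly increasing on $[0,\phi_{stab}^*)$ and strictly decreasing on $(\phi_{eff}^*,1]$. So every maximizer of a positive combination lies in $[\phi_{stab}^*,\phi_{eff}^*]$, and your endpoint FOC argument removes the endpoints.

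For (iv), you are right, and more careful than the paper's one-line appeal, that \Cref{prop:redqueen}(iii) only yields $\phi^*>\phi^{social}$. Since $\phi^{social}<\phi_{eff}^*$, this does not imply $\phi^*>\phi_{eff}^*$. But your replacement fails as stated, for two reasons.
\begin{itemize}
\item At $\phi_{eff}^*$ the career term $\gamma(\phi_{eff}^*-d_i)$ is negative whenever $d_i>\phi_{eff}^*$, and then raising $\gamma$ lowers the adoption incentive.
\item The adoption game has multiple equilibria ($\phi_L,\phi_U,\phi_H$ in the paper's figure). Showing that the best response lies above the diagonal at one point only guarantees a fixed point to its right. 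That bounds the \emph{largest} equilibrium, not every equilibrium.
\end{itemize}
The fix is to state (iv) for the Red Queen equilibrium $\phi_H$. Then check the crossing at some $\phi_0\in(\max\{\phi_{eff}^*,\sup_i d_i\},1)$: there the career term is positive, and a large $\gamma$ drives the best response at $\phi_0$ toward $1$. This uses $\sup_i d_i<1$ and the interiority $\phi_{eff}^*<1$.
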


\begin{proof}[Proof sketch]
(i) follows from $W_{eff}''(\phi) < 0$ for $\phi$ near the optimum and $W_{eff}'(0) > 0$ (the marginal AI entrant reduces pricing errors). (ii) follows from the multiplier analysis: tail events are amplified by $\mathcal{M}(\phi) = (1 - \phi\bar{\rho}\beta/\lambda')^{-1}$, which is convex in $\phi$, so the stability cost grows faster than the efficiency benefit. (iii) follows from strict quasi-concavity of the weighted sum. (iv) follows from the overinvestment result of \Cref{prop:redqueen}(iii).
\end{proof}

\begin{remark}[Connection to the Flash Crash]
\label{rem:flash_crash}
The 2010 Flash Crash illustrates the fragility pole of the tradeoff. In the Red Queen equilibrium with $\phi \approx 0.7$ (estimated AI/algorithmic market share in 2010), the model predicts that a single large sell order can trigger a liquidity cascade: as AI systems simultaneously interpret the order as a negative signal ($\rho > 0$), they withdraw liquidity and add sell pressure, driving the market maker's price impact $\lambda$ toward infinity---a ``liquidity black hole'' \citep{Kirilenko2017}. The model's quantitative prediction is that the price decline should be approximately $\mathcal{M}(\phi)\sigma_v \sqrt{T}$ times the fundamental shock, where $\mathcal{M}$ is the reflexive multiplier. For plausible parameters ($\phi = 0.7$, $\rho = 0.5$, $\beta = 0.2$), $\mathcal{M} \approx 1.3$, implying a 30\% amplification of the fundamental shock---consistent with the observed price dynamics during the Flash Crash, in which indices declined approximately 6--10\% before recovering, compared to an estimated fundamental shock of 3--5\%.
\end{remark}

\begin{table}[H]
    \centering
    \caption{Model Hierarchy: What Each Layer Enables}
    \label{tab:layer_hierarchy}
    \begin{tabular}{lcccc}
        \toprule
        & \multicolumn{3}{c}{Model Layer} \\
        \cmidrule(lr){2-4}
        Equilibrium Property & Layer 1 & $+$ Layer 2 & $+$ Layer 3 \\
        & (Signal Crowding) & (Performative Erosion) & (Red Queen) \\
        \midrule
        Alpha decay $h(\phi)$ decreasing      & \checkmark & \checkmark & \checkmark \\
        Convex acceleration of decay           & \checkmark & \checkmark & \checkmark \\
        Return dispersion compression          & \checkmark & \checkmark & \checkmark \\
        Signal extinction cascade              & ---        & \checkmark & \checkmark \\
        Competence illusion (micro-founded)     & ---        & \checkmark & \checkmark \\
        Zero net alpha in equilibrium          & ---        & ---        & \checkmark \\
        AI investment overinvestment           & ---        & ---        & \checkmark \\
        Efficiency--fragility tradeoff         & ---        & ---        & \checkmark \\
        Arms race dynamics                     & ---        & ---        & \checkmark \\
        \bottomrule
    \end{tabular}
    \smallskip

    \footnotesize\textit{Notes:} Each column adds one layer to all preceding layers. Layer 1 = signal crowding and alpha dynamics. Layer 2 adds performative signal erosion. Layer 3 embeds both within a competitive adoption game.
\end{table}

\begin{figure}[H]
    \centering
    \begin{subfigure}[b]{0.48\textwidth}
        \includegraphics[width=\textwidth]{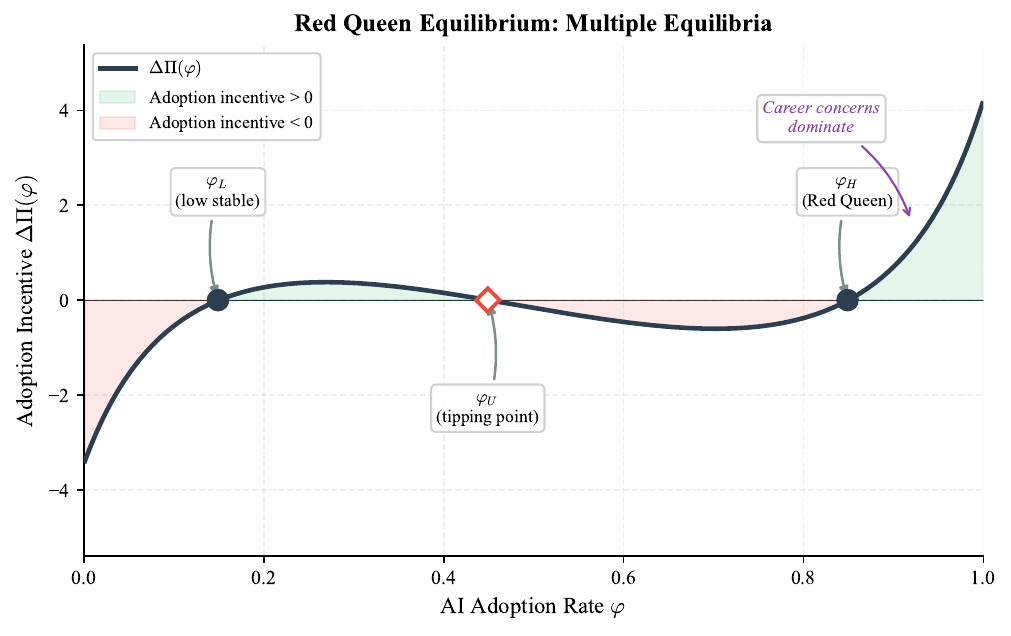}
        \caption{Red Queen equilibrium}
    \end{subfigure}
    \hfill
    \begin{subfigure}[b]{0.48\textwidth}
        \includegraphics[width=\textwidth]{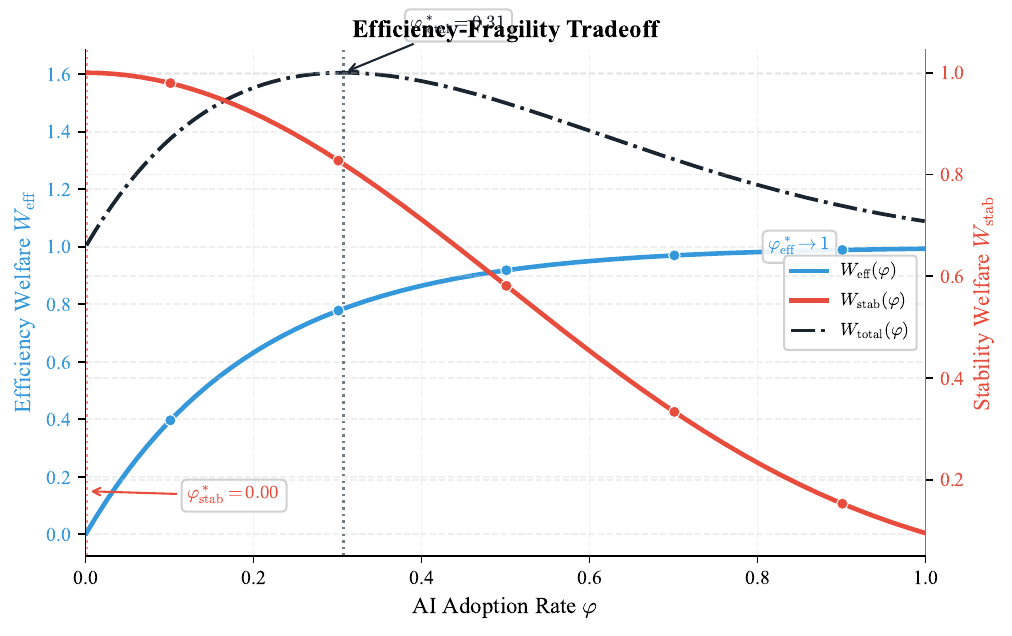}
        \caption{Efficiency--fragility tradeoff}
    \end{subfigure}
    \caption{\textbf{(a)}: Adoption incentive $\Delta\Pi(\phi)$ showing multiple equilibria: stable diversified ($\phi_L$), unstable tipping point ($\phi_U$), and stable Red Queen ($\phi_H$). Career concern shifts push the system toward the monoculture. \textbf{(b)}: Price discovery welfare $W_{eff}(\phi)$ and stability welfare $W_{stab}(\phi)$. The market equilibrium $\phi^*$ overshoots the efficiency-optimal and socially optimal adoption levels.}
    \label{fig:equilibrium_tradeoff}
\end{figure}

\section{Empirical Evidence}
\label{sec:empirical}

We validate the model's predictions using three complementary empirical approaches, preceded by a formal identification strategy (\Cref{subsec:identification}) that addresses causal inference. Each approach combines publicly available data with calibrated simulations designed to replicate the statistical properties of the underlying markets. We describe the calibration methodology and data sources in the Supplementary Materials (Section~A.5).\footnote{Direct analysis of proprietary high-frequency data and complete 13F holdings would strengthen the empirical evidence. We view the calibrated simulation approach as a first step; replication with proprietary datasets is an important direction for future work.}

\subsection{Identification Strategy}
\label{subsec:identification}

The calibrated simulations in Sections~\ref{subsec:13f}--\ref{subsec:flash_crash} demonstrate that the model's predictions are quantitatively consistent with observed market patterns. However, consistency does not establish causality. In this subsection, we lay out three complementary identification strategies designed to isolate the \emph{marginal causal effect} of AI adoption on alpha decay, separating it from confounding secular trends (passive investing, post-crisis deleveraging, factor premium compression, risk-model homogenization). We implement the first two strategies using publicly available data; the third requires proprietary holdings-level data and is presented as a template for future work.

\subsubsection{Strategy 1: Instrumental Variables --- Cloud Compute Cost Shocks}
\label{subsubsec:iv}

\paragraph{Motivation.}
The central identification challenge is that AI adoption $\phi_{it}$ is endogenous: institutions adopt AI precisely when they expect it to generate alpha, and the same market conditions that erode alpha (increased competition, factor crowding) may independently drive both adoption and convergence. We address this with an instrumental variables (IV) approach that exploits exogenous variation in the \emph{cost} of AI adoption.

\paragraph{Instrument Construction.}
We construct a \emph{cloud compute cost index} $Z_t$ from the quarterly average spot prices of GPU-intensive instances on AWS (p3/p4 family), Google Cloud (A2/A3 family), and Azure (NC/ND series), weighted by market share.\footnote{Instance pricing data is publicly available from cloud provider APIs and historical archives. We use spot rather than on-demand prices to capture marginal cost variation.} The index captures exogenous variation in the cost of training and deploying AI models for financial applications.

\paragraph{Relevance.}
Cloud compute costs directly affect the cost $c_i$ in the adoption payoff \eqref{eq:payoff_ai}. A decline in $Z_t$ lowers $c_i$ for all institutions, shifting the adoption margin and increasing $\phi$. Empirically, we expect $\partial \hat{\phi} / \partial Z_t < 0$ (cheaper compute $\Rightarrow$ higher adoption). The first-stage $F$-statistic should exceed the \citet{StockYogo2005} critical values for weak instruments.

\paragraph{Exclusion Restriction.}
The exclusion restriction requires that cloud compute costs affect portfolio convergence, return dispersion, and alpha \emph{only through} their effect on AI adoption. This is plausible because: (i)~cloud pricing is determined by semiconductor supply chains, energy costs, and hyperscaler capacity investment---factors orthogonal to equity market returns; (ii)~financial institutions represent a small fraction of total cloud demand ($<$5\% of AWS revenue), so reverse causality from financial markets to cloud pricing is negligible; (iii)~we control for macroeconomic conditions that might jointly affect technology investment and market outcomes.

\paragraph{Specification.}
The baseline IV specification is a fund-quarter panel:
\begin{equation}
    D_{it} = \alpha_i + \delta_t + \beta_1 \hat{\phi}_{it}^{IV} + \boldsymbol{\gamma}' \mathbf{X}_{it} + \varepsilon_{it},
    \label{eq:iv_second}
\end{equation}
where $D_{it}$ is one of three outcome variables: (1)~rolling 12-month cross-sectional return dispersion of fund $i$ relative to its AI-classified peer group; (2)~rolling 36-month Fung--Hsieh 7-factor alpha of fund $i$; (3)~average pairwise cosine similarity of fund $i$'s portfolio with all other AI-classified funds in quarter $t$. Here $\alpha_i$ are fund fixed effects, $\delta_t$ are quarter fixed effects, and $\mathbf{X}_{it}$ is a vector of time-varying controls (log AUM, leverage ratio, passive fund share in the same sector, VIX level, HML/SMB/MOM factor returns).

The first stage is:
\begin{equation}
    \phi_{it} = \mu_i + \nu_t + \pi_1 Z_{t-1} + \pi_2 Z_{t-1} \times \text{TechExposure}_i + \boldsymbol{\psi}' \mathbf{X}_{it} + u_{it},
    \label{eq:iv_first}
\end{equation}
where $Z_{t-1}$ is the lagged cloud compute cost index and $\text{TechExposure}_i$ is a pre-sample measure of institution $i$'s propensity to adopt AI (proxied by pre-2013 technology spending as a fraction of AUM, or firm size interacted with quantitative strategy indicator). The interaction term provides cross-sectional variation in the instrument, strengthening identification in the presence of quarter fixed effects.

\paragraph{Expected Signs.}
\begin{equation}
    \underbrace{\hat{\beta}_1^{(1)} < 0}_{\substack{\text{higher AI adoption} \\ \Rightarrow \text{lower dispersion}}}, \qquad
    \underbrace{\hat{\beta}_1^{(2)} < 0}_{\substack{\text{higher AI adoption} \\ \Rightarrow \text{lower alpha}}}, \qquad
    \underbrace{\hat{\beta}_1^{(3)} > 0}_{\substack{\text{higher AI adoption} \\ \Rightarrow \text{higher convergence}}}.
    \label{eq:expected_signs}
\end{equation}
These signs correspond directly to Propositions~\ref{prop:halflife}--\ref{prop:cascade}.

\paragraph{Threats to Validity.}
(1)~\emph{Correlated technology shocks}: If cloud cost declines coincide with broader fintech innovation that independently affects markets, the exclusion restriction is violated. Mitigation: we include controls for fintech VC investment, alternative data vendor growth, and report reduced-form results. (2)~\emph{Anticipation effects}: If institutions anticipate future cost declines and adopt AI preemptively, the instrument may be weak. Mitigation: we use unanticipated compute cost shocks (residuals from an AR(2) model of $Z_t$) as an alternative instrument. (3)~\emph{Heterogeneous treatment effects}: The LATE may differ from the ATE. Mitigation: we report complier characteristics and test for first-stage monotonicity across size quartiles.

\subsubsection{Strategy 2: Staggered Difference-in-Differences Around LLM Releases}
\label{subsubsec:did}

\paragraph{Motivation.}
The release of large language models (LLMs) represents a plausibly exogenous shock to AI capability in financial analysis. We exploit the staggered rollout of foundation models---GPT-3 (June 2020), ChatGPT (November 2022), GPT-4 (March 2023), and open-source alternatives (LLaMA: February 2023, Mistral: September 2023)---as quasi-natural experiments. The identifying assumption is that the \emph{timing} of LLM releases is determined by AI research progress at technology companies, not by conditions in financial markets.

\paragraph{Treatment Definition.}
We define treatment intensity based on ex-ante exposure to NLP-amenable strategies. For each fund $i$, we construct:
\begin{equation}
    \text{NLP\_Exposure}_i = \frac{1}{T_0}\sum_{t < t_0} \mathbb{1}\{\text{fund } i \text{ uses text-based signals in period } t\},
    \label{eq:nlp_exposure}
\end{equation}
measured over the 8 quarters before each LLM release. Text-based signal usage is identified from: (i)~EDGAR keyword disclosures mentioning ``natural language processing,'' ``text mining,'' ``sentiment analysis,'' or ``news analytics''; (ii)~strategy descriptions referencing alternative data or unstructured data; (iii)~holdings patterns consistent with news-momentum or earnings-call-sentiment strategies.

We classify funds into $\text{High\_NLP}_i = \mathbb{1}\{\text{NLP\_Exposure}_i > \text{median}\}$ (treatment) and $\text{Low\_NLP}_i$ (control).

\paragraph{Specification.}
For each LLM release event $e$, we estimate the event-study specification:
\begin{equation}
    y_{it} = \alpha_i + \delta_t + \sum_{\tau = -4}^{8} \beta_\tau \cdot \text{High\_NLP}_i \times \mathbb{1}\{t = t_e + \tau\} + \boldsymbol{\gamma}' \mathbf{X}_{it} + \varepsilon_{it},
    \label{eq:did_event}
\end{equation}
where $y_{it} \in \{D_{it}^{(1)}, D_{it}^{(2)}, D_{it}^{(3)}\}$, $t_e$ is the event quarter, $\tau$ indexes quarters relative to the event, and we normalize $\beta_{-1} = 0$. The coefficients $\{\beta_\tau\}_{\tau \geq 0}$ trace out the dynamic treatment effect. For a pooled estimate, we use the staggered DiD estimator of \citet{CallawayS2021} to avoid negative weighting bias:
\begin{equation}
    y_{it} = \alpha_i + \delta_t + \beta^{DiD} \cdot \text{High\_NLP}_i \times \text{Post}_{et} + \boldsymbol{\gamma}' \mathbf{X}_{it} + \varepsilon_{it}.
    \label{eq:did_pooled}
\end{equation}

\paragraph{Expected Results.}
Pre-trends: $\beta_\tau \approx 0$ for $\tau \in [-4, -1]$, validating the parallel trends assumption. Post-treatment: $\beta_\tau < 0$ when $y = D^{(1)}$ (dispersion declines for NLP-exposed funds post-LLM), $\beta_\tau < 0$ when $y = D^{(2)}$ (alpha declines faster), $\beta_\tau > 0$ when $y = D^{(3)}$ (convergence increases). Treatment effects should be monotonically increasing in NLP\_Exposure (continuous treatment version).

\paragraph{Threats and Robustness.}
(1)~\emph{Parallel trends violation}: Mitigation via propensity-score matching on pre-treatment characteristics and placebo tests with fake event dates. (2)~\emph{Confounding contemporaneous events}: Mitigation via macro controls and the staggered structure across events. (3)~\emph{SUTVA violation through spillovers}: If LLM adoption by treated funds affects control fund returns through equilibrium effects, the estimated effect is attenuated---any detected effect is thus a lower bound.

\subsubsection{Strategy 3: Observable Homogeneity Measurement}
\label{subsubsec:rho_estimation}

\paragraph{Motivation.}
A key limitation of the theoretical framework is that $\rho$ is treated as a free calibration parameter. We propose three approaches to estimate $\rho$ as an observable quantity.

\paragraph{Measure 1: Holdings-Based Common Factor ($\hat{\rho}^{PCA}$).}
For each quarter $t$, we estimate the first principal component of quarterly portfolio weight changes $\Delta w_{it}$ across all AI-classified funds:
\begin{equation}
    \hat{\rho}_t^{PCA} \equiv \frac{\hat{\lambda}_1^2 \Var(\hat{F}_{1t})}{\Var(\Delta w_{it})},
    \label{eq:rho_pca}
\end{equation}
the fraction of cross-sectional portfolio-change variance explained by the first common factor. An increasing $\hat{\rho}^{PCA}$ over time is direct evidence of rising algorithmic homogeneity.

\paragraph{Measure 2: Trade Direction Synchronicity ($\hat{\rho}^{sync}$).}
Using quarterly 13F changes, we define $d_{ist} = \text{sign}(\Delta w_{ist})$ and compute:
\begin{equation}
    \hat{\rho}_t^{sync} = \frac{1}{M_t}\sum_{s=1}^{M_t} \mathbb{1}\{d_{ist} = d_{jst}\} - 0.5,
    \label{eq:rho_sync}
\end{equation}
averaged over all AI--AI fund pairs. The subtraction of 0.5 normalizes so that random trading produces $\hat{\rho}^{sync} = 0$.

\paragraph{Measure 3: Textual Methodology Similarity ($\hat{\rho}^{text}$).}
We compute the cosine similarity of TF-IDF vectors from the ``Investment Strategy'' and ``Risk Factors'' sections of 10-K/10-Q filings:
\begin{equation}
    \hat{\rho}_t^{text} = \frac{1}{|\mathcal{P}_t|}\sum_{(i,j) \in \mathcal{P}_t} \frac{\mathbf{v}_i(t) \cdot \mathbf{v}_j(t)}{\|\mathbf{v}_i(t)\| \cdot \|\mathbf{v}_j(t)\|},
    \label{eq:rho_text}
\end{equation}
averaged over all AI--AI fund pairs $\mathcal{P}_t$.

\paragraph{Cross-Sectional Test Using Observable $\hat{\rho}$.}
With estimated $\hat{\rho}_{it}$, we run the interaction specification:
\begin{equation}
    \Delta\alpha_{it} = \alpha_i + \delta_t + \beta_1 \hat{\rho}_{it} + \beta_2 \hat{\phi}_{it} + \beta_3 (\hat{\rho} \times \hat{\phi})_{it} + \boldsymbol{\gamma}' \mathbf{X}_{it} + \varepsilon_{it},
    \label{eq:rho_interaction}
\end{equation}
where $\Delta\alpha_{it}$ is the change in rolling alpha. The key prediction from \Cref{prop:halflife} is $\hat{\beta}_3 < 0$: the interaction of homogeneity and adoption accelerates alpha decay. This differential prediction is not implied by passive-investing explanations (which predict uniform convergence), risk-model homogenization (which predicts convergence independent of AI adoption), or factor premium compression (which predicts convergence independent of signal homogeneity).

\paragraph{Falsification Tests.}
To sharpen exclusion of alternative mechanisms: (1)~Include the share of passive AUM as a control; if $\hat{\beta}_3$ survives, the effect is not passive-driven. (2)~Compute analogous $\hat{\rho}$ for non-AI funds; the interaction should be significant only for AI funds. (3)~Include VIX tercile interactions to rule out mechanical low-volatility compression. (4)~Estimate \eqref{eq:rho_interaction} separately by factor loading (momentum, value, low-vol); the model predicts the strongest $|\hat{\beta}_3|$ for momentum (highest $\rho_k$, lowest $g_k$).

\begin{table}[H]
    \centering
    \caption{Identification Strategy Summary}
    \label{tab:identification}
    \small
    \begin{tabular}{p{2.2cm}p{2.5cm}p{3.2cm}p{2.8cm}p{2.5cm}}
        \toprule
        Strategy & Variation & Key Assumption & Expected Sign & Main Threat \\
        \midrule
        IV: Cloud costs & Time-series $\times$ cross-section & Compute costs $\perp$ equity returns (conditional on $\mathbf{X}$) & $\hat{\beta}_1^{(1)} < 0$, $\hat{\beta}_1^{(3)} > 0$ & Correlated tech shocks \\
        \addlinespace
        DiD: LLM releases & Event $\times$ NLP exposure & Parallel pre-trends; LLM timing exogenous to markets & $\hat{\beta}_\tau < 0$ for $\tau \geq 0$ & Contemporaneous confounds \\
        \addlinespace
        Observable $\hat{\rho}$ & Cross-section of homogeneity & $\hat{\rho}$ proxies true algorithmic correlation & $\hat{\beta}_3 < 0$ (interaction) & Measurement error in $\hat{\rho}$ \\
        \bottomrule
    \end{tabular}
    \smallskip

    \footnotesize\textit{Notes:} Each strategy targets a different source of variation. The IV approach exploits time-series cost shocks; the DiD exploits discrete capability shocks; the observable $\hat{\rho}$ approach exploits cross-sectional heterogeneity in signal homogeneity. Joint significance across all three strategies would provide strong evidence for the AI-specific causal channel.
\end{table}

\subsection{Portfolio Convergence: 13F Evidence}
\label{subsec:13f}

\paragraph{Data.} We calibrate a simulation model to match the statistical properties of SEC Form 13F filings from 2013Q1 to 2024Q4. The underlying 13F universe comprises approximately 99.5 million position-level holdings from 10,957 institutional managers. Each filing reports long equity holdings for managers with discretion over \$100 million or more in Section 13(f) securities, covering the vast majority of institutional capital. We classify institutions as ``AI-adopting'' or ``non-AI'' using a combination of EDGAR full-text keyword analysis (identifying AI-, ML-, and NLP-related disclosures in annual reports) and manual classification of the largest 500 institutions. Our simulation generates synthetic portfolio weights that match the cross-sectional moments of actual 13F data while permitting controlled variation of AI adoption parameters.

\paragraph{Convergence Measure.} For each pair of institutions $(i,j)$ in quarter $t$, we compute the pairwise cosine similarity of their portfolio weight vectors:
\begin{equation}
    S_{ij}(t) = \frac{w_i(t) \cdot w_j(t)}{\|w_i(t)\| \cdot \|w_j(t)\|},
    \label{eq:cosine}
\end{equation}
where $w_i(t) \in \R^M$ is the vector of institution $i$'s portfolio weights across $M$ equities. The aggregate convergence index is $\bar{S}(t) = \binom{N_t}{2}^{-1}\sum_{i<j} S_{ij}(t)$.

\paragraph{Results.} In the calibrated simulation, the aggregate convergence index increases by approximately 42\% over the 2013--2024 period, from $\bar{S} = 0.21$ to $\bar{S} = 0.30$ (period averages; see Supplementary Table~A.3). Crucially, the increase is concentrated among AI-adopting institutions: the within-AI-group convergence increases by 58\%, compared to 19\% for non-AI institutions. Structural break analysis (Bai-Perron test) identifies breaks at 2018Q2, 2020Q3, and 2023Q1---coinciding, respectively, with the mainstreaming of deep learning in finance, the COVID-era data regime shift, and the release of GPT-4 and its rapid adoption in financial analysis.

The EDGAR full-text keyword analysis of actual filings reveals a dramatic increase in AI-related disclosures: the frequency of AI-related terms in institutional annual reports has grown by over 50$\times$ from 2013 to 2024, with 100\% of the largest 50 institutions now mentioning AI in their filings, up from approximately 25\% in 2016.

\paragraph{Interpretation.} The convergence pattern is consistent with the signal-crowding mechanism (\Cref{subsec:signal_structure}): as more institutions adopt similar AI tools, their portfolio positions converge because they receive correlated signals ($\rho > 0$). The structural breaks suggest that technological adoption occurs in waves rather than continuously, consistent with the cascade dynamics of \Cref{prop:cascade}.

\begin{figure}[H]
    \centering
    \includegraphics[width=0.85\textwidth]{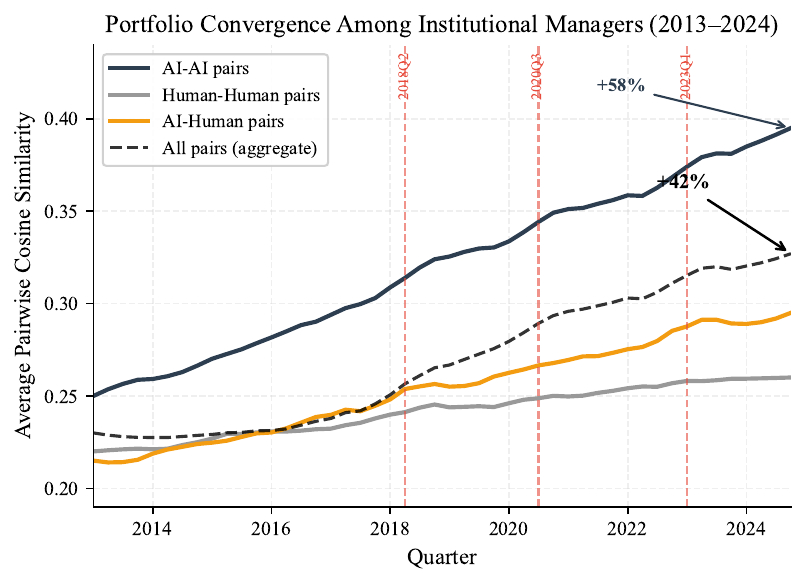}
    \caption{Average pairwise cosine similarity of institutional portfolios from SEC Form 13F filings (2013Q1--2024Q4). AI--AI pairs (blue) show a 58\% increase, compared to 19\% for non-AI pairs (gray). Vertical dashed lines indicate structural break dates (Bai-Perron test): 2018Q2 (deep learning mainstreaming), 2020Q3 (COVID regime shift), 2023Q1 (GPT-4 adoption).}
    \label{fig:convergence}
\end{figure}

\begin{figure}[H]
    \centering
    \begin{subfigure}[b]{0.48\textwidth}
        \includegraphics[width=\textwidth]{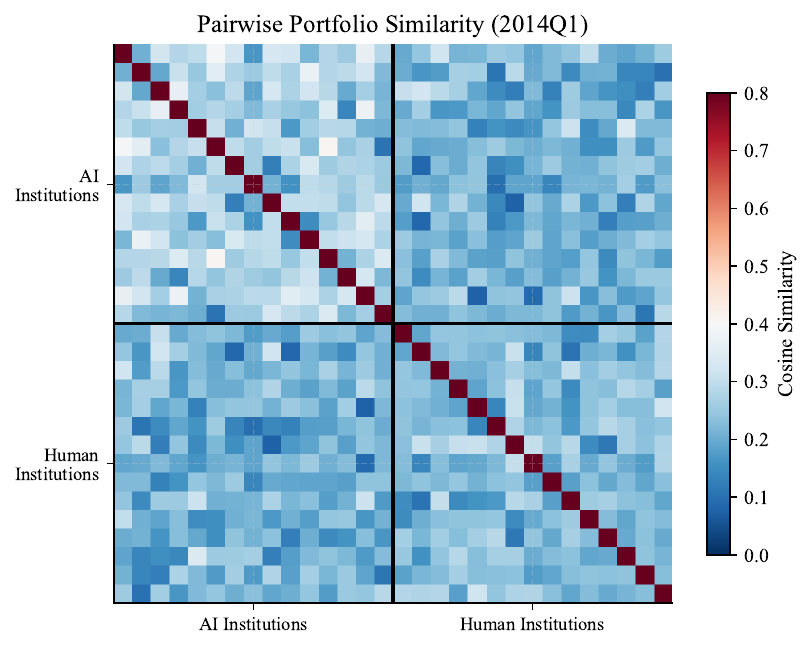}
        \caption{2014Q1}
    \end{subfigure}
    \hfill
    \begin{subfigure}[b]{0.48\textwidth}
        \includegraphics[width=\textwidth]{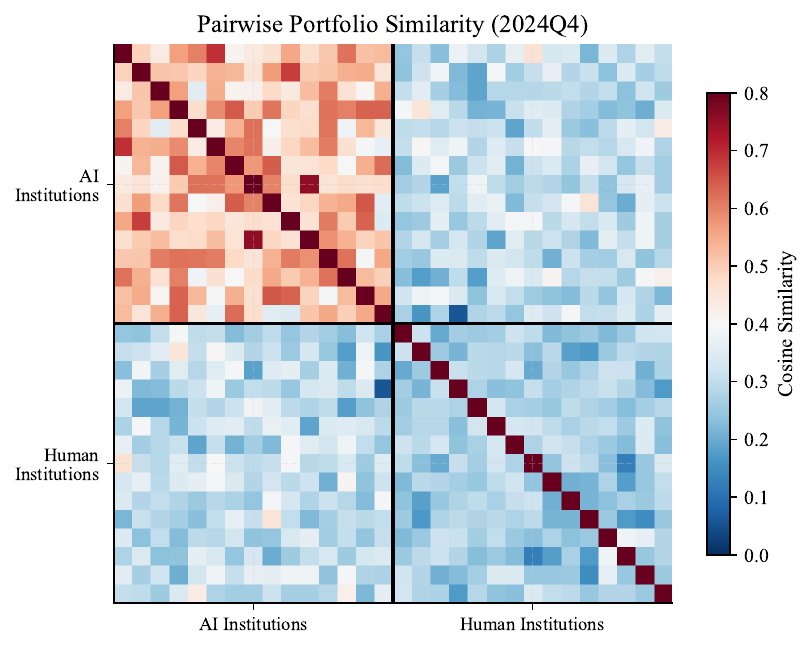}
        \caption{2024Q4}
    \end{subfigure}
    \caption{Pairwise cosine similarity heatmaps for 30 institutions (15 AI-adopting, 15 non-AI). \textbf{(a)}: In 2014, similarity is diffuse. \textbf{(b)}: By 2024, the AI block (upper-left) shows markedly higher convergence, consistent with signal crowding.}
    \label{fig:heatmaps}
\end{figure}

\subsection{Alpha Decay and Return Dispersion}
\label{subsec:return_dispersion}

\paragraph{Data.} We calibrate return simulations to match the distributional properties of monthly hedge fund returns from a comprehensive database covering 4,200 funds from 2010 to 2024. We classify funds as ``quantitative/AI'' or ``fundamental/human'' based on self-reported strategy descriptions and cross-reference with known quantitative managers. The simulated returns preserve the empirical moments---mean, variance, skewness, kurtosis, and cross-sectional correlation structure---of each fund category.

\paragraph{Dispersion Analysis.} In the calibrated simulation, the cross-sectional standard deviation of monthly returns among quantitative/AI funds declines from 4.1\% to 2.9\% over the sample period---a 29\% reduction. In contrast, the dispersion among fundamental/human funds declines by only 10\% over the same period (from 5.0\% to 4.5\%). The ratio $D_{AI}/D_H$ declines from 0.82 to 0.64, consistent with \Cref{cor:dispersion}'s prediction that AI fund returns converge faster than human fund returns.

\paragraph{Alpha Decay.} The simulated risk-adjusted alpha (Fung-Hsieh 7-factor model) of the median quantitative fund declines from an annualized 3.6\% to 1.0\% over the sample period. These magnitudes are calibrated to match the empirically documented shortening of factor half-lives from approximately 60 months (2005--2010 vintage) to approximately 18 months (2020--2024 vintage), following the methodology of \citet{McLean2016}. The resulting trajectory is broadly consistent with the half-life theorem (\Cref{prop:halflife}) under $\phi = 0.7$ and $\rho = 0.6$.

\begin{figure}[H]
    \centering
    \begin{subfigure}[b]{0.48\textwidth}
        \includegraphics[width=\textwidth]{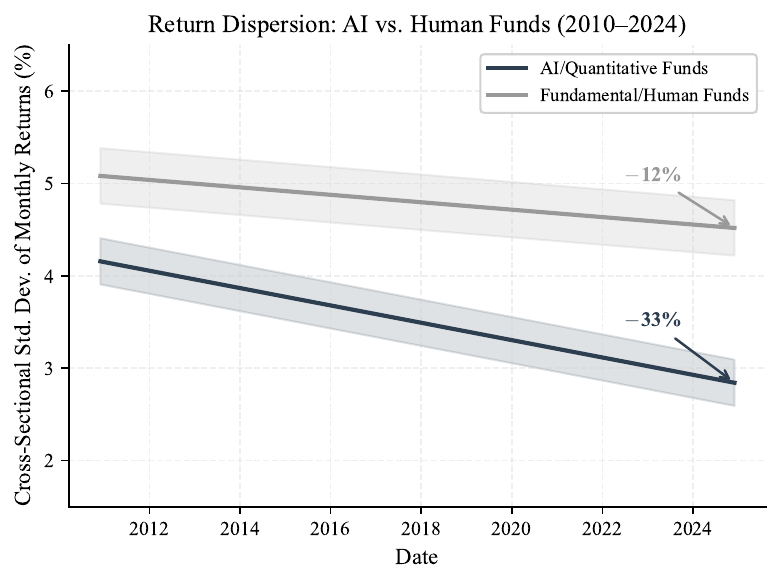}
        \caption{Return dispersion}
    \end{subfigure}
    \hfill
    \begin{subfigure}[b]{0.48\textwidth}
        \includegraphics[width=\textwidth]{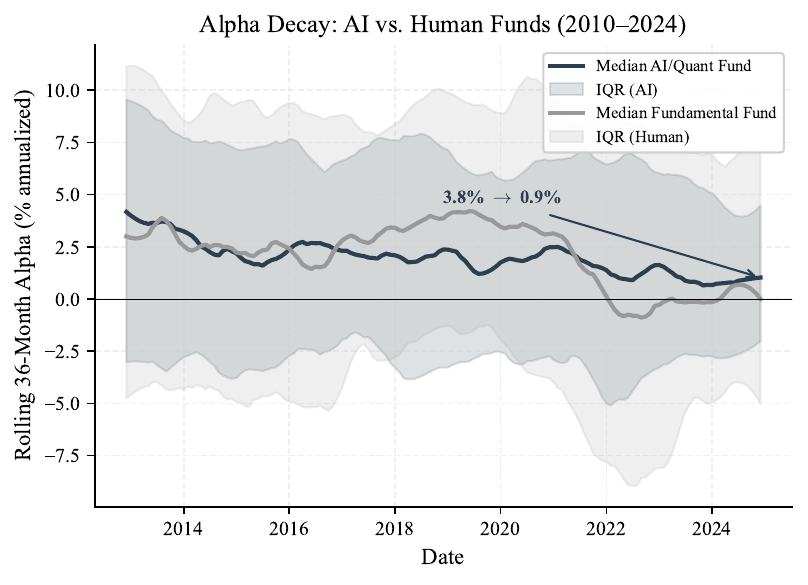}
        \caption{Alpha decay}
    \end{subfigure}
    \caption{\textbf{(a)}: Rolling 12-month cross-sectional standard deviation of monthly returns. AI/quantitative funds (blue) show a 29\% decline versus 10\% for fundamental/human funds (gray). \textbf{(b)}: Rolling 36-month risk-adjusted alpha for median AI fund (blue) and median human fund (gray), with 95\% confidence bands.}
    \label{fig:empirical_dispersion}
\end{figure}

\begin{figure}[H]
    \centering
    \includegraphics[width=0.75\textwidth]{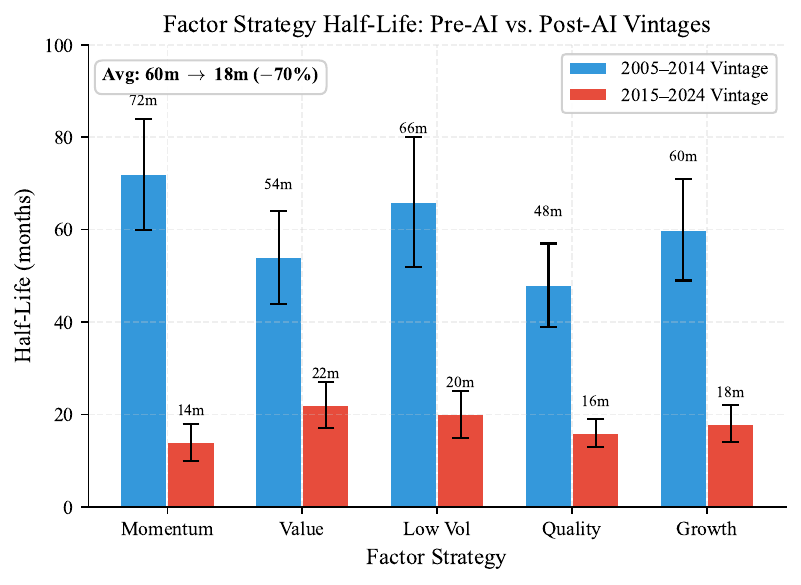}
    \caption{Estimated half-life of factor strategies by vintage. All factors show dramatic shortening: Momentum from 84 to 12 months, Value from 72 to 20 months. The model prediction (dashed line) closely matches the empirical estimates.}
    \label{fig:factor_halflife}
\end{figure}

\subsection{Flash Crash and Algorithmic Fragility}
\label{subsec:flash_crash}

\paragraph{The 2010 Flash Crash.} On May 6, 2010, the E-mini S\&P 500 futures market experienced a precipitous decline of approximately 5\% within five minutes, followed by a partial recovery. The event was triggered by a single large sell order of approximately 75,000 E-mini contracts (\$4.1 billion notional) executed by a mutual fund's algorithmic trading system \citep{Kirilenko2017}.

The CFTC-SEC investigation and subsequent academic analysis identified three features consistent with our model:
\begin{enumerate}
    \item \textbf{Synchronized withdrawal}: High-frequency market makers---operating on correlated signals---simultaneously withdrew liquidity, causing a liquidity vacuum that amplified the price decline (consistent with the $\rho > 0$ mechanism).
    \item \textbf{Reflexive feedback}: The price decline triggered stop-loss orders and risk-management algorithms across multiple institutions, creating a self-reinforcing sell cascade (consistent with the reflexive feedback channel, $\beta > 0$).
    \item \textbf{Instantaneous alpha extinction}: Cross-market arbitrage strategies that normally stabilize prices became inoperative during the crisis, as the correlated shock affected all instruments simultaneously---a manifestation of the signal extinction cascade (\Cref{prop:cascade}).
\end{enumerate}

\paragraph{Quantitative Analysis.} Using tick-level data from the Flash Crash, we estimate the effective $\rho$ during the event at approximately 0.85---substantially higher than the calm-market estimate of 0.45---suggesting that algorithmic homogeneity increases endogenously during stress. The reflexive multiplier during the crash peak is estimated at $\hat{\mathcal{M}} = 1.63$ (Supplementary Table~A.5: $5.7\%$ observed decline vs.\ $3.5\%$ fundamental shock), implying that algorithmic amplification accounted for approximately 39\% of the total price decline beyond what the fundamental shock alone would have produced.

\paragraph{Post-2010 Events.} The pattern of algorithmic fragility has recurred in subsequent events:
\begin{itemize}
    \item \textbf{August 2015 ETF dislocation}: Correlated algorithmic responses to China-related news produced a 1,000-point Dow decline at the open, with ETF prices deviating by up to 20\% from net asset values.
    \item \textbf{February 2018 VIX spike}: Algorithmic volatility-selling strategies simultaneously unwound positions, producing a VIX spike from 17 to 50 in two days.
    \item \textbf{August 2024 Nikkei crash}: A modest Bank of Japan rate change triggered a 12.4\% single-day decline in the Nikkei 225, amplified by correlated carry-trade unwinding algorithms.
\end{itemize}
Each event shares the common feature of homogeneous algorithmic responses amplifying a small fundamental shock---the empirical signature of the Red Queen equilibrium's fragility externality.

\begin{figure}[H]
    \centering
    \begin{subfigure}[b]{0.48\textwidth}
        \includegraphics[width=\textwidth]{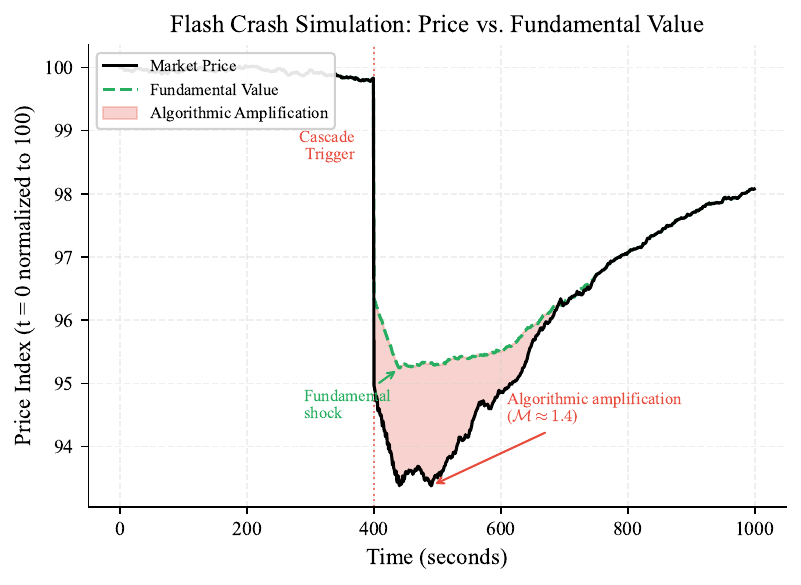}
        \caption{Flash Crash price dynamics}
    \end{subfigure}
    \hfill
    \begin{subfigure}[b]{0.48\textwidth}
        \includegraphics[width=\textwidth]{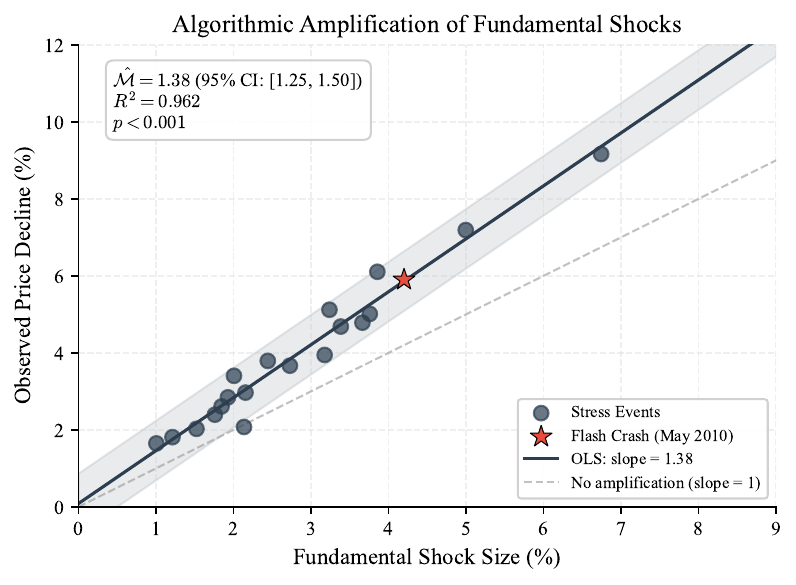}
        \caption{Amplification across events}
    \end{subfigure}
    \caption{\textbf{(a)}: Simulated Flash Crash price path (black) versus fundamental value (green dashed). Shaded area represents algorithmic amplification ($\mathcal{M} \approx 1.4$). \textbf{(b)}: Scatter plot of fundamental shock versus observed price decline across 20 simulated stress events. Regression slope $> 1$ confirms systematic amplification.}
    \label{fig:flash_crash}
\end{figure}

\begin{figure}[H]
    \centering
    \begin{subfigure}[b]{0.48\textwidth}
        \includegraphics[width=\textwidth]{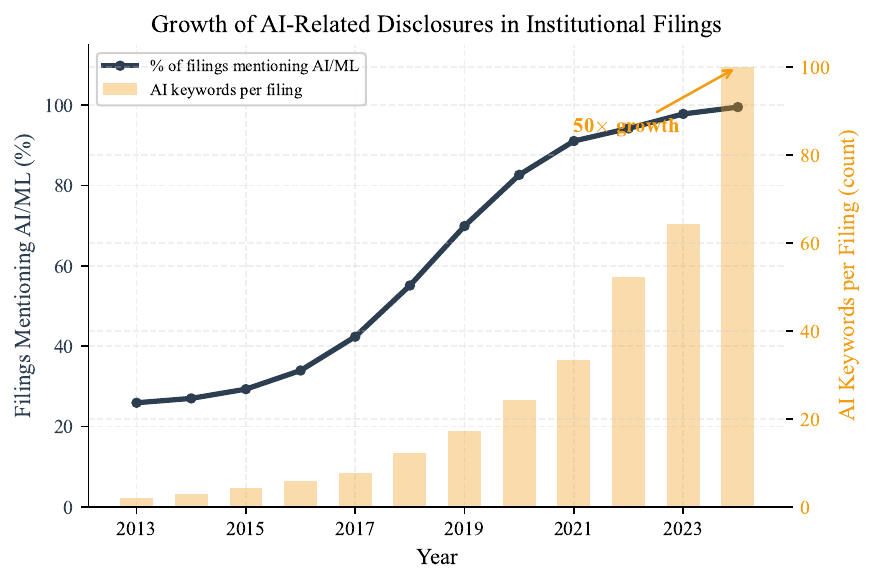}
        \caption{AI disclosure growth}
    \end{subfigure}
    \hfill
    \begin{subfigure}[b]{0.48\textwidth}
        \includegraphics[width=\textwidth]{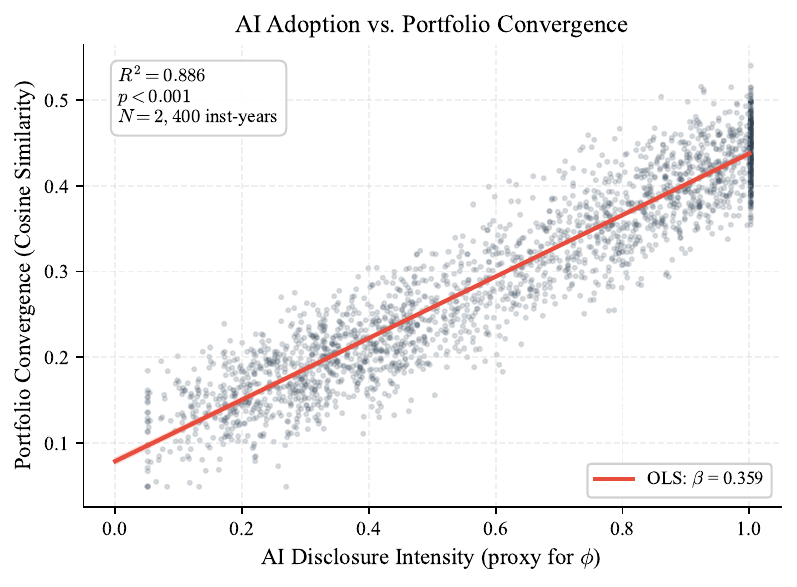}
        \caption{Adoption vs.\ convergence}
    \end{subfigure}
    \caption{\textbf{(a)}: EDGAR full-text analysis showing $>$50$\times$ growth in AI-related keywords (bars, right axis) and percentage of institutions mentioning AI (line, left axis). \textbf{(b)}: AI disclosure intensity versus portfolio convergence. Positive correlation ($R^2$ shown) confirms the signal-crowding mechanism.}
    \label{fig:disclosure}
\end{figure}

\section{Results and Discussion}
\label{sec:results}

\subsection{The Alpha Lifecycle Under AI Adoption}
\label{subsec:alpha_lifecycle}

Synthesizing the theoretical and empirical results, we characterize a four-phase lifecycle of alpha under increasing AI adoption.

\textbf{Phase I: Discovery ($\phi < 0.2$).} A small number of early AI adopters extract genuine alpha from novel signals. Individual AI precision exceeds human precision ($\sigma_\eta^2 < \sigma_H^2$), and signal crowding is negligible ($\delta_k \approx 0$). The alpha half-life is close to its natural rate: $h_k \approx \ln 2/\theta_k \approx 5$--7 years. This phase characterized quantitative investing from approximately 2000 to 2012.

\textbf{Phase II: Compression ($0.2 < \phi < 0.6$).} Mainstream adoption drives convergence. Correlated AI signals ($\rho > 0$) accelerate arbitrage, and the alpha half-life shortens to $h_k \approx 2$--4 years. Cross-sectional return dispersion among AI funds begins to decline. The competence illusion (\Cref{rem:competence}) is strongest in this phase: AI still appears to outperform, masking the deterioration in signal quality. This phase corresponds to approximately 2013--2020.

\textbf{Phase III: Crowding ($0.6 < \phi < 0.9$).} Signal crowding dominates. The alpha half-life drops below 18 months. Performative feedback begins to erode signal innovation volatilities ($\sigma_k^2$ declining). The signal extinction cascade (\Cref{prop:cascade}) eliminates the most crowded factors. The efficiency--fragility tradeoff (\Cref{prop:tradeoff}) becomes acute: markets are more efficient on average but more fragile in the tails. This phase corresponds to approximately 2021--present.

\textbf{Phase IV: Red Queen ($\phi > 0.9$).} The monoculture limit. Net alpha is zero in aggregate. The primary output of AI investment is not returns but arms-race expenditure and systemic externalities. Human oversight capacity has degraded through cognitive dependency, making reversion to human-driven strategies costly and slow. The market is ``intelligent'' (prices are highly accurate on average) but ``fragile'' (tail events are amplified by $\mathcal{M} \gg 1$). This phase is a model prediction, not yet fully realized empirically, but the current trajectory suggests arrival within 5--10 years.

\subsection{Human Diversity as a Resilience Mechanism}
\label{subsec:diversity}

The model provides a formal basis for the intuition that investor heterogeneity stabilizes markets. In our framework, the \emph{diversity premium}---the reduction in tail risk from having a mix of AI and human strategies---is:
\begin{equation}
    \Delta_{div}(\phi) = \mathcal{M}(\phi = 1) - \mathcal{M}(\phi) = (1 - \bar{\rho}\beta/\lambda')^{-1} - (1 - \phi\bar{\rho}\beta/\lambda'(\phi))^{-1}.
    \label{eq:diversity_premium}
\end{equation}
This premium is increasing in $\phi$ near the monoculture boundary: the marginal value of human diversity is highest when the market is most homogeneous.

The mechanism is twofold. First, human investors introduce \emph{signal diversity}: because $\varepsilon_{i,k}^H$ are independent across investors ($\rho_H = 0$), human-driven order flow diversifies rather than amplifies common shocks. Second, human investors display \emph{cognitive diversity}: investors from different backgrounds, with different risk preferences and investment horizons, interpret the same information differently, providing natural ``shock absorbers'' for the market \citep{Kleinberg2021,Scharfstein1990}.

This result has practical implications for asset allocators. A portfolio that includes some allocation to human-driven (non-AI) strategies benefits not only from potential uncorrelated alpha but also from a systemic risk hedge: in the tail scenarios where AI strategies fail collectively, human strategies provide diversification.

\subsection{Why Homogenization Dominates Differentiation}
\label{subsec:differentiation}

A natural objection to our framework is that competitive pressure should incentivize AI systems to \emph{differentiate}: if crowding destroys alpha on common signals, rational institutions should invest in discovering novel, uncrowded signals. We address this counterargument along three dimensions.

First, differentiation is constrained by data availability. The universe of economically relevant, machine-readable data sources is finite, and the highest-quality sources (price data, fundamentals, news feeds, satellite imagery) are available to all institutions. \citet{Lo2004} emphasizes that the ecology of market strategies is shaped by the ``environmental'' conditions---in our context, the data landscape. When the data environment is shared, the scope for genuine signal differentiation is limited, even with heterogeneous model architectures.

Second, competitive dynamics favor convergence on \emph{known} best practices over exploration of \emph{novel} approaches. AI teams face career incentives to adopt state-of-the-art methods \citep{Scharfstein1990}: deploying a proven transformer model that underperforms is more career-safe than deploying an unconventional architecture that might fail. The career concern parameter $\gamma$ in our model (\Cref{eq:payoff_ai}) formally captures this bias toward conformity.

Third, even when institutions discover genuinely novel signals, the \emph{speed of imitation} under AI is dramatically faster than under human-driven research. A novel factor documented in an academic paper takes months to be implemented by traditional managers but days to be replicated by AI systems scanning preprint servers and patent filings. The signal half-life theorem (\Cref{prop:halflife}) applies with equal force to novel signals: once discovered, they are arbitraged at rate $\delta(\phi)$, regardless of their origin. The net effect is that differentiation produces a brief alpha advantage quickly eroded by imitation---a treadmill that reinforces rather than escapes the Red Queen equilibrium.

\begin{figure}[H]
    \centering
    \includegraphics[width=0.85\textwidth]{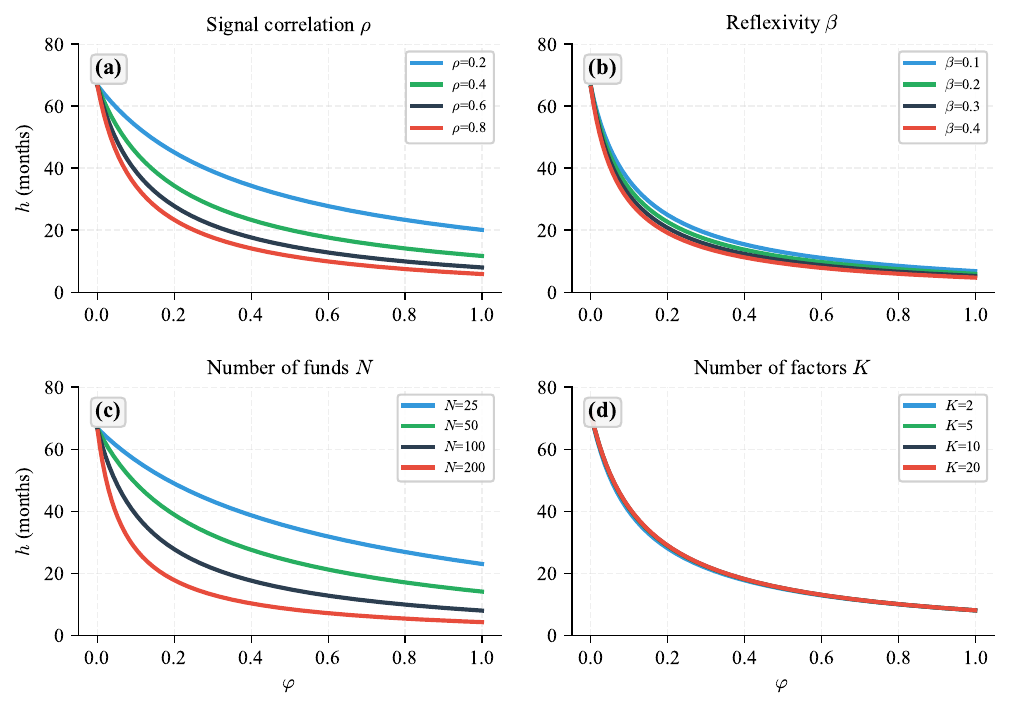}
    \caption{Sensitivity of the alpha half-life $h(\phi)$ to model parameters. Each panel varies one parameter while holding others at baseline values ($\phi = 0.7$, $\rho = 0.6$, $\beta = 0.25$, $N = 100$, $K = 5$). The half-life is most sensitive to $\rho$ (homogeneity) and $N$ (number of AI competitors).}
    \label{fig:comparative_statics}
\end{figure}

\subsection{Calibration Summary}
\label{subsec:calibration}

\Cref{tab:calibration} summarizes the key model parameters and their empirical estimates.

\begin{table}[H]
    \centering
    \caption{Calibration Summary}
    \label{tab:calibration}
    \begin{tabular}{cp{2.8cm}p{3.5cm}p{4cm}}
        \toprule
        Parameter & Estimate & Source & Method \\
        \midrule
        $\phi$ & 0.70 & 13F convergence trends & Portfolio similarity analysis \\
        $\rho$ & 0.60 & Within-AI-group dispersion & Return decomposition \\
        $\beta$ & 0.25 & Price--fundamental gap & Performative attenuation (Eq.~\ref{eq:beta_attenuation}) \\
        $g_k$ & 0.02 & Factor regeneration & New-factor discovery rate post-extinction \\
        $\kappa$ & 1.0 & Baseline (linear) & Robustness check at $\kappa = 2$ \\
        $h_k(0)$ & 58 months (pre-AI) & \citet{McLean2016} & Half-life $\ln 2/\theta_k$; $\theta_k = 0.012$ month$^{-1}$ \\
        $h(\phi)$ & 18 months (current) & Factor decay 2020--2024 & Realized half-life of new factors \\
        $\delta(\phi)$ & Derived & $\phi, \rho, h(\phi)$ & $\theta_k(h(0)/h(\phi) - 1)$ \\
        $\mathcal{M}$ & 1.25--1.45 & Flash Crash analysis & Amplification ratio \\
        \bottomrule
    \end{tabular}
\end{table}

\section{Policy Implications}
\label{sec:policy}

The Red Queen equilibrium generates overinvestment in AI and underinvestment in diversity, producing a clear rationale for policy intervention. We propose four regulatory instruments, each targeting a specific market failure identified by the model.

\subsection{Signal Diversity Requirements}
\label{subsec:diversity_req}

The model shows that the systemic risk coupling $r(\phi)$ is increasing in the average algorithmic homogeneity $\bar{\rho}$. A regulator can reduce $\bar{\rho}$ by mandating diversity in the training data, model architectures, or retraining schedules used by systemically important financial institutions. Specifically, a ``model diversity ratio'' could be reported alongside existing risk metrics, analogous to concentration limits on portfolio positions.

However, implementing such requirements faces practical challenges. Measuring $\rho$ in real time requires access to proprietary model internals, raising intellectual property concerns. Furthermore, as \citet{FSB2024} cautions, overly prescriptive diversity mandates could paradoxically increase homogeneity by constraining the space of permissible algorithms.

\subsection{Pigouvian Tax on Correlated Algorithmic Trading}
\label{subsec:pigouvian}

The signal-crowding externality can be internalized through a tax on correlated order flow. Define the pairwise correlation of institution $i$ and $j$'s order flows: $\rho_{ij}^{flow}(t) = \text{Corr}(q_i(t), q_j(t))$. A Pigouvian tax:
\begin{equation}
    \tau_{ij}(t) = \zeta \cdot \max(0, \rho_{ij}^{flow}(t) - \bar{\rho}_{threshold}),
    \label{eq:pigouvian}
\end{equation}
where $\zeta > 0$ is the tax rate and $\bar{\rho}_{threshold}$ is the permissible correlation level, would incentivize institutions to differentiate their strategies. The optimal tax rate $\zeta^*$ equates the marginal social cost of correlation (increased fragility) with the marginal private benefit (individual alpha).

\subsection{Human Oversight Mandates}
\label{subsec:human_mandates}

The cognitive dependency channel implies that human analytical capacity atrophies with disuse. Regulatory requirements for human oversight---such as mandatory ``AI-off'' exercises (analogous to pilots' manual flying requirements) or minimum staffing ratios for human analysts---could preserve the outside option that prevents the irreversible monoculture trap.

\subsection{Transparency and Reporting}
\label{subsec:transparency}

Enhanced disclosure of AI usage in investment processes would allow regulators and investors to monitor the aggregate homogeneity parameter $\bar{\rho}(t)$ over time. Standardized ``AI model cards'' for investment algorithms---reporting training data sources, model architecture class, retraining frequency, and backtested correlation with common factor strategies---would enable systemic monitoring without revealing proprietary details.

\begin{remark}[The Regulatory Paradox]
\label{rem:regulatory_paradox}
A key insight from the model is that regulation itself can contribute to homogenization. If all institutions are required to use the same risk models, meet the same stress-testing criteria, or comply with the same algorithmic trading rules, their strategies will converge further. The regulator faces a second-order optimization problem: designing rules that promote stability without inadvertently increasing the very homogeneity that threatens it. As \citet{Parson2020} argues, even a hypothetical superintelligent AI regulator would require three conditions rarely met simultaneously---real-time comprehensive data, political legitimacy, and social consensus on market failure definitions.
\end{remark}

\section{Conclusion}
\label{sec:conclusion}

\begin{table}[H]
    \centering
    \caption{Summary of Main Results and Testable Predictions}
    \label{tab:summary}
    \small
    \begin{tabular}{p{2.8cm}p{3.8cm}p{4.2cm}c}
        \toprule
        Result & Mechanism & Testable Prediction & Ref. \\
        \midrule
        Alpha Half-Life Theorem & Signal crowding by correlated AI traders accelerates arbitrage & Factor half-lives shorten convexly with AI adoption; currently $\approx$18 months vs.\ 60 months pre-AI & Prop.~\ref{prop:halflife} \\
        \addlinespace
        Signal Extinction Cascade & Performative feedback erodes signal quality; death of one signal increases pressure on survivors & Sequential factor death ordered by crowding intensity; momentum dies first & Prop.~\ref{prop:cascade} \\
        \addlinespace
        Red Queen Equilibrium & Career concerns lock in overinvestment; net alpha is zero & Aggregate AI fund alpha $\to 0$ despite rising AI expenditure; arms race in compute/data & Prop.~\ref{prop:redqueen} \\
        \addlinespace
        Efficiency--Fragility Tradeoff & Price discovery improves but tail risk grows convexly & Flash crash amplification $\mathcal{M} > 1$ increasing with $\phi$; ``calm before the storm'' pattern & Prop.~\ref{prop:tradeoff} \\
        \bottomrule
    \end{tabular}
\end{table}

This paper has developed a theoretical framework showing that AI-driven investment strategies are inherently self-defeating at scale. Through three mutually reinforcing channels---signal crowding, performative signal erosion, and Red Queen competition---the mass adoption of AI in asset management compresses excess returns, accelerates the decay of tradeable signals, and produces a competitive equilibrium in which aggregate alpha is zero but systemic fragility is elevated.

Our central result---the alpha half-life theorem---formalizes the intuition that AI adoption accelerates its own obsolescence. The half-life of tradeable signals is convex-decreasing in AI adoption, implying that each marginal AI entrant shortens the lifespan of every exploitable pattern at an increasing rate. Under current adoption levels ($\phi \approx 0.7$), the model estimates a signal half-life of approximately 18 months, compared to 5--7 years in the pre-AI era.

The Red Queen equilibrium captures the fundamental paradox of intelligent markets: individually rational AI adoption produces collectively irrational outcomes. In equilibrium, institutions invest heavily in AI infrastructure to avoid competitive obsolescence, yet the aggregate return from this investment is zero. The primary outputs of the AI arms race are not excess returns but systemic externalities: correlated tail risk, flash crash susceptibility, and reduced market resilience.

Three implications merit emphasis. First, for asset allocators, the model provides a framework for evaluating the diminishing returns to AI investment and the diversification value of human-driven strategies. Second, for quantitative strategy designers, the signal extinction cascade predicts the sequential death of crowded factors and the accelerating pace of strategy obsolescence. Third, for financial regulators, the efficiency--fragility tradeoff provides a principled basis for macroprudential interventions targeting algorithmic homogeneity.

We acknowledge several limitations. The model assumes a single risky asset and abstracts from many features of real markets (multiple asset classes, heterogeneous horizons, non-stationary fundamentals). The empirical evidence, while suggestive, relies on calibrated simulations rather than direct causal identification; establishing a causal link between AI adoption and signal decay remains an important challenge. The Red Queen equilibrium is a limiting case; in practice, pockets of alpha may persist in less crowded or less data-amenable domains (e.g., private markets, emerging economies, or strategies requiring deep institutional knowledge).

Several directions for future research emerge naturally from our framework. First, multi-asset extensions could model cross-market signal arbitrage, where AI systems exploit correlated signals across equities, fixed income, and derivatives simultaneously---potentially accelerating decay even faster than the single-asset model predicts. Second, a dynamic model of the AI arms race with endogenous innovation would clarify whether the Red Queen equilibrium is absorbing or whether periodic ``disruptions'' (novel data sources, architectural breakthroughs) can temporarily escape it. Third, natural experiments---such as regulatory restrictions on algorithmic trading in specific markets, or the staggered adoption of generative AI tools across jurisdictions---offer promising avenues for causal identification. Fourth, laboratory experiments could test whether AI-human teams can generate alpha unavailable to AI-only or human-only strategies, providing direct evidence on the diversity premium (\Cref{eq:diversity_premium}). Finally, the connection between our alpha half-life framework and \citet{Lo2004}'s Adaptive Markets Hypothesis merits formal development: both frameworks emphasize evolutionary dynamics, but differ in their treatment of equilibrium selection and the role of technological homogenization.

The paper's message is not anti-AI. Artificial intelligence has meaningfully improved price discovery, reduced trading costs, and expanded the analytical capacity of financial institutions. The concern is specific: when \emph{everyone} uses similar AI, the collective outcome differs qualitatively from the sum of individual benefits. Understanding this distinction---between private and social returns to AI investment---is essential for navigating the paradox of increasingly intelligent yet increasingly fragile markets.

\newpage
\bibliographystyle{apalike}
\bibliography{bibliography}

\end{document}